\documentclass[sigconf]{acmart}

\usepackage{xcolor}
\usepackage{xspace}
\usepackage{enumitem}

\usepackage{multirow}

\newcommand{\hideit}[1]{}

\usepackage{algorithm}
\usepackage{algpseudocode}

\begin{document}

\title{ManCAR: Manifold-Constrained Latent Reasoning with Adaptive Test-Time Computation for Sequential Recommendation }

\author{
  Kun Yang$^{1}$, Yuxuan Zhu$^{2}$, Yazhe Chen$^1$, Siyao Zheng$^3$, Bangyang Hong$^{2}$, Kangle Wu$^{2}$, Yabo Ni$^2$, Anxiang Zeng$^2$, Cong Fu$^2$, Hui Li$^1\dagger$
}
\renewcommand{\shortauthors}{Kun Yang et al.}
\thanks{
    $^1$Key Laboratory of Multimedia Trusted Perception and Efficient Computing, Ministry of Education of China, Xiamen University, China. \\
    $^2$Shopee Pte. Ltd. \\
    $^3$School of Informatics, Xiamen University, China. \\
    $\dagger$ Hui Li is the corresponding author. hui@xmu.edu.cn
}

\begin{abstract}

Sequential recommendation increasingly employs latent multi-step reasoning to enhance test-time computation. Despite empirical gains, existing approaches largely drive intermediate reasoning states via target-dominant objectives without imposing explicit feasibility constraints. This results in latent drift, where reasoning trajectories deviate into implausible regions. We argue that effective recommendation reasoning should instead be viewed as navigation on a collaborative manifold rather than free-form latent refinement. To this end, we propose ManCAR (Manifold-Constrained Adaptive Reasoning), a principled framework that grounds reasoning within the topology of a global interaction graph. ManCAR constructs a local intent prior from the collaborative neighborhood of a user's recent actions, represented as a distribution over the item simplex. During training, the model progressively aligns its latent predictive distribution with this prior, forcing the reasoning trajectory to remain within the valid manifold. At test time, reasoning proceeds adaptively until the predictive distribution stabilizes, avoiding over-refinement. We provide a variational interpretation of ManCAR to theoretically validate its drift-prevention and adaptive test-time stopping mechanisms. Experiments on seven benchmarks demonstrate that ManCAR consistently outperforms state-of-the-art baselines, achieving up to a 46.88\% relative improvement w.r.t. NDCG@10. Our code is available at https://github.com/FuCongResearchSquad/ManCAR.

\end{abstract}

\ccsdesc[500]{Information systems~Recommender systems}

\keywords{Sequential Recommendation, Latent Reasoning}

\maketitle
\section{Introduction}
\label{sec:intro}

\begin{figure}[t]  
\centering
\includegraphics[width=0.95\columnwidth]{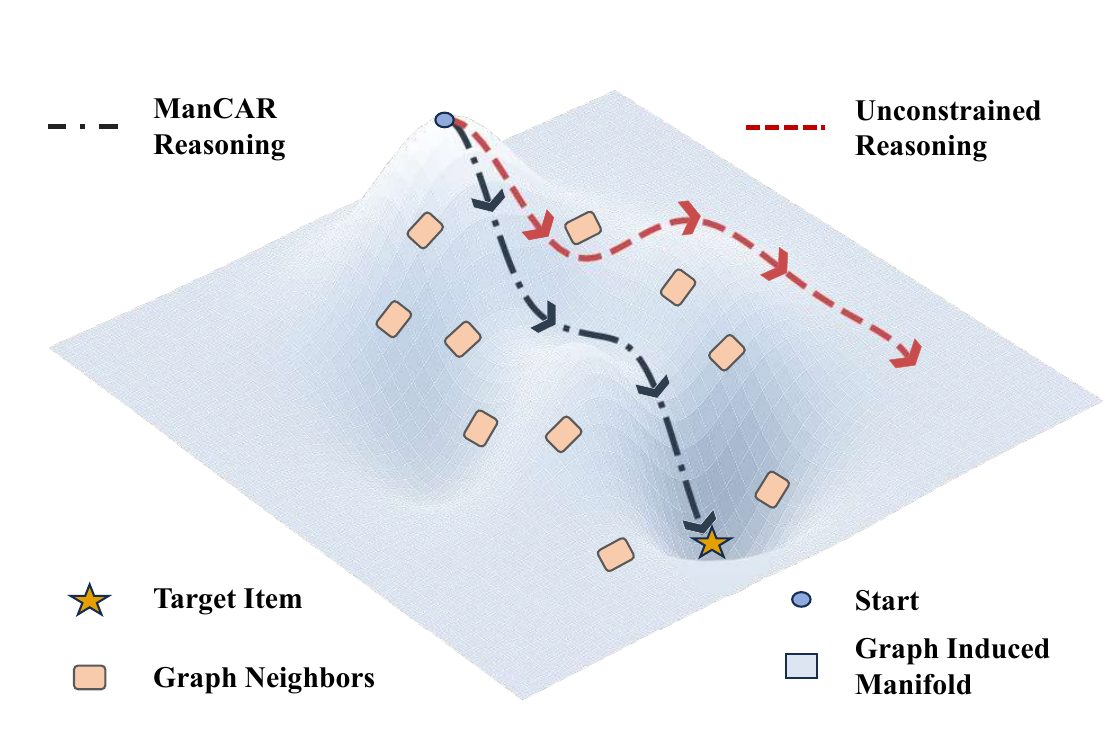}
\vspace{-15pt}
\caption{Illustration of constrained versus unconstrained latent reasoning. Graph-conditioned reasoning trajectories remain within a collaborative manifold defined by neighbor items, enabling stable and directed refinement toward the target. In contrast, unconstrained reasoning may drift outside feasible regions,
leading to inefficient or unstable paths.}
\vspace{-10pt}
\label{fig:motivation}
\end{figure}

Sequential recommendation has been significantly reshaped by the growing adoption of generative modeling paradigms~\cite{DeldjooHMKSRVSK24}.  
Inspired by Large Language Models (LLMs), recent work has begun to explore latent multi-step reasoning in sequential recommendation to extend test-time computation~\cite{abs-2503-22675,abs-2509-18091,abs-2505-16865,abs-2601-03153}. In LLMs, such reasoning is commonly realized through the recursive depth paradigm, where chain-of-thought (CoT) tokens are replaced by \textbf{un-decoded} latent states produced by the model~\cite{abs-2412-06769,ShenYZHDH25,abs-2510-15522,abs-2509-02350}. These latent states are iteratively refined for multiple steps using shared model parameters, and only decoded back to the output space at the final step, effectively increasing the model's computational depth without expanding its architecture. This paradigm offers a natural and efficient template for incorporating reasoning into sequential recommendation, without requiring explicit textual representations.

Despite empirical gains, existing methods remain poorly understood. They typically guide latent reasoning using target-dominant objectives~\cite{abs-2503-22675,abs-2601-03153}, such as supervising only the final reasoning state with the target item, or mapping each intermediate state to an item probability distribution and progressively concentrating it toward a target \emph{one-hot} distribution. However, they impose no \emph{explicit} constraints on the evolution of intermediate reasoning states. As a result, the latent reasoning trajectory is largely unconstrained and retains excessive degrees of freedom while ``walking'' through the item space. This often leads to \textbf{latent drift} (Fig.~\ref{fig:motivation}), where intermediate states migrate into regions that are poorly aligned with user preferences. Such drift is particularly detrimental at test time, degrading model robustness and generalization.

From a recommendation perspective, an overlooked but fundamental property of user behavior is that interactions are inherently collaborative rather than independent. This naturally motivates guiding latent reasoning using an item interaction graph, which encodes collective patterns across users. Users with similar preferences tend to interact with similar items, and item transitions exhibit regularities shaped by population-level behaviors. These collaborative signals naturally define a notion of plausibility: given a user's recent interactions, only a subset of items is realistically relevant in the near future. Such assumptions are widely adopted in graph-based recommendation~\cite{WuT0WXT19,YingHCEHL18,WeiRTWSCWYH24,ChangGZHNSJ021,YangHXL22,YangWWHHW24,JuFGLLQQSSXYYZWLZ24,YuY00CN22}, where a user's next interaction is expected to lie within the local neighborhood of their recent interests.

In this paper, we propose ManCAR (\underline{Man}ifold-\underline{C}onstrained \underline{A}daptive \underline{R}easoning), a principled framework that grounds latent reasoning within the topology of a global interaction graph. 
Rather than naively enumerating graph traversal paths as reasoning trajectories, which is computationally expensive and unnecessary for latent reasoning, we leverage the interaction graph as a feasibility constraint on the reasoning process. Specifically, we treat the neighborhood induced by the item graph as a \textbf{manifold constraint}, restricting latent reasoning trajectories to evolve within collaboratively reachable regions while refining toward the target item. In probabilistic terms, this constraint corresponds to a region on the item probability simplex where items connected to the user's recent actions are assigned substantially higher probability mass than unrelated items. This feasibility view naturally admits a \textbf{variational interpretation of latent reasoning}. Introducing latent reasoning states can be viewed as performing inference over an intermediate intent variable, with the graph-induced neighborhood serving as a structure-aware prior. Latent reasoning can then be formulated using an objective similar to the Evidence Lower Bound (ELBO), which balances target prediction with reasoning feasibility.

Besides, while the manifold constraint defines where latent reasoning can evolve, it leaves open the question of when reasoning should terminate. Since we train reasoning states to traverse collaboratively feasible regions on the item probability simplex toward the target, further refinement becomes uninformative once the item probability distribution produced by the latent state stabilizes. This motivates us to design a \textbf{convergence-based stopping criterion} for MacCAR, allowing test-time computation to terminate adaptively when the model has sufficiently localized the target region. 

Our contributions can be summarized as follows:
\begin{itemize}[leftmargin=*]
    \item We propose ManCAR, a framework guiding latent reasoning by interpreting collaborative neighborhoods in the interaction graph as feasibility constraints on the item probability simplex, mitigating latent drift. ManCAR further enables adaptive test-time computation via a convergence-based stopping criterion.
    
    \item We theoretically establish a variational interpretation of ManCAR, demonstrating how it prevents latent drift and confirm the validity of our adaptive test-time stopping mechanism.
    
    \item Experiments on benchmarks demonstrate that ManCAR consistently improves effectiveness over state-of-the-art baselines, achieving up to a 46.88\% relative improvement w.r.t. NDCG@10.
    
\end{itemize}

\section{Our Proposed ManCAR}
Fig.~\ref{fig:framework} depicts ManCAR illustrated in this section.
We begin by formalizing the problem setting and the used notation in Sec.~\ref{sec:problem}. 
In Sec.~\ref{sec:manifoldreason}, we introduce the core concept of manifold-constrained reasoning, postulating that valid latent reasoning trajectories should be confined to the local neighborhood of the user's recent interactions. 
To operationalize this, Sec.~\ref{sec:variationalobj} derives a variational training objective that treats latent reasoning as approximate inference over an intent variable, regularized by a graph-conditioned teacher prior. 
Sec.~\ref{sec:graphsmoothness} provides a theoretical analysis of this objective, interpreting the KL-divergence regularization as a gradient flow that enforces local graph smoothness. 
Based on this theoretical foundation, Sec.~\ref{sec:implementation} details the practical implementation of ManCAR, including teacher prior construction strategies and the overall loss function. 
Finally, Sec.~\ref{sec:adaptive_reasoning} introduces a training mechanism for scheduling the teacher distribution during training, which naturally enables an adaptive termination criterion at test time.

\begin{figure}[t]  
  \centering
  \includegraphics[width=\columnwidth]{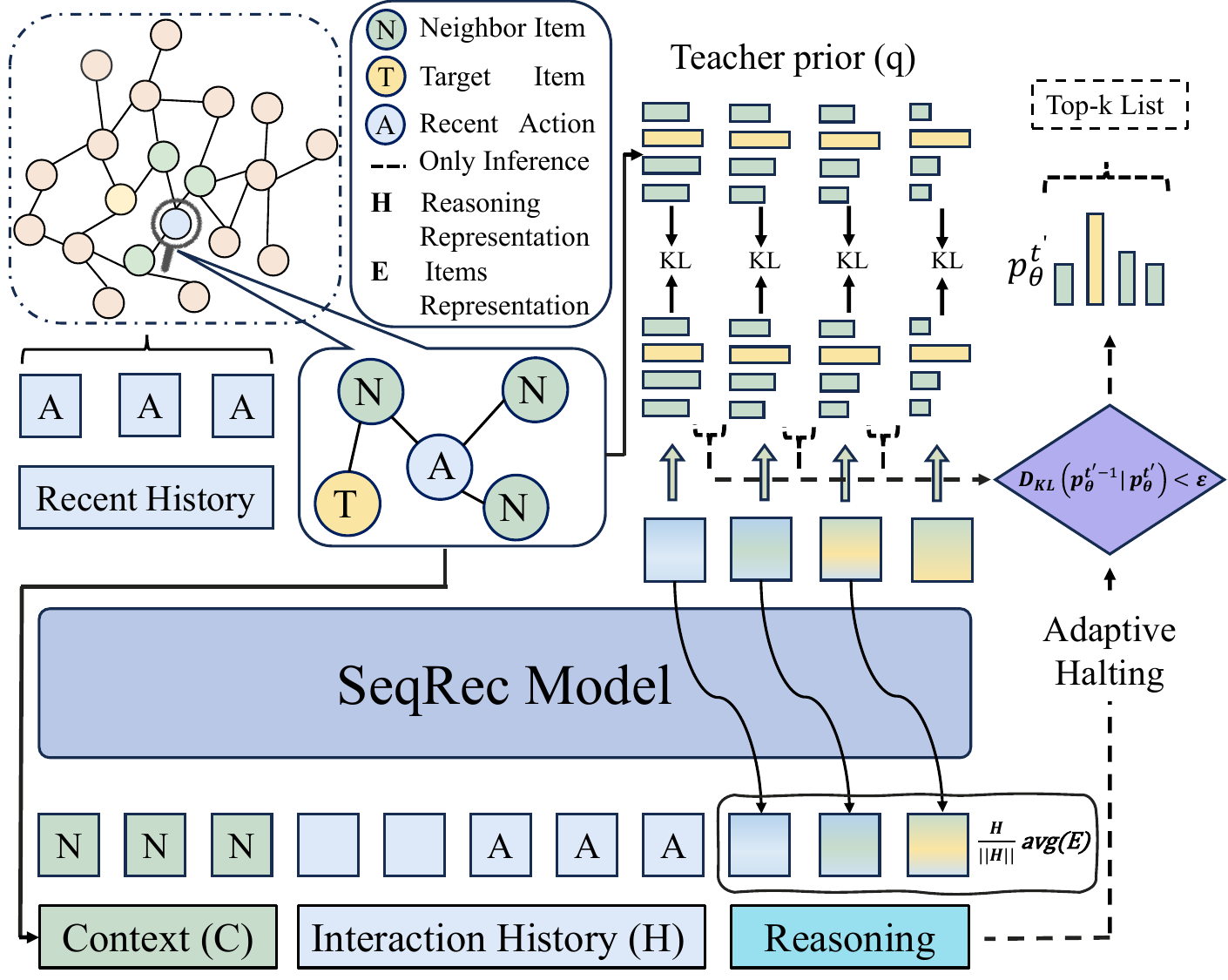}
  \vspace{-20pt}
  \caption{Overview of ManCAR. ManCAR performs multi-step latent reasoning constrained by a graph-induced candidate set. At each step, the reasoning state is regularized toward a scheduled teacher prior defined on collaboratively reachable items, ensuring manifold-consistent refinement. Adaptive test-time termination stops reasoning when the induced item distributions stabilize.}
  \vspace{-10pt}
  \label{fig:framework}
\end{figure}

\subsection{Problem Setting and Notation}
\label{sec:problem}

Let $\mathcal{I}$ denote a finite set of items and $\mathcal{U}$ the set of users. We consider the sequential recommendation setting, where each user $u\in\mathcal{U}$ interacts with items over time. For a given user, the interaction history is denoted by $H = (i_1, i_2, \dots, i_{T-1}), i_t \in \mathcal{I}$,
where items are ordered chronologically. The objective is to predict the next item $i^* = i_T$ conditioned on the observed history $H$.

We model collaborative signals among items using a global \emph{item interaction graph} $\mathcal{G}=(\mathcal{I},\mathcal{E})$, where nodes correspond to items and edges encode co-interaction relationships aggregated across users. An edge $(i,j)\in\mathcal{E}$ indicates that items $i$ and $j$ are frequently co-interacted or consecutively consumed by users. For an item $i\in\mathcal{I}$, we denote its $k$-hop graph neighborhood by $\mathcal{N}(i;\mathcal{G};k)$.

Following standard latent reasoning settings~\cite{abs-2503-22675}, we denote by $\mathbf{h}_t \in \mathbb{R}^d$ the hidden representation produced by the backbone encoder at step $t$, i.e., $\mathbf{h}_t = f(H[:t])$, where $f(\cdot)$ is typically a Transformer-based encoder. We further introduce $\mathbf{r}_{t'} \in \mathbb{R}^d$, $t' \in \{1,\dots,T'\}$, to denote the latent reasoning states generated through iterative refinement. Unless otherwise specified, the initial reasoning state is set to the final encoder state, i.e., $\mathbf{r}_1 = \mathbf{h}_{T-1}$.

\subsection{Manifold-Constrained Latent Reasoning}
\label{sec:manifoldreason}

We now articulate the core conceptual motivation behind ManCAR. 
As discussed in Sec.~\ref{sec:intro}, standard latent reasoning based methods suffer from unconstrained degrees of freedom, leading to latent drift. 
To address this, we propose a geometric perspective: reasoning should not traverse the latent space freely but must be viewed as navigation constrained to a ``collaborative manifold''. 

We adopt a latent multi-step reasoning setting~\cite{abs-2503-22675}. Starting from the final encoder state $\mathbf{h}^{T-1}$, the model generates a sequence of reasoning states $\mathbf{r}_{t'}$ ($t' \in \{1,\dots,T'\}$), each representing an intermediate hypothesis of user intent, and produces the final recommendation by decoding the last state.

A central challenge in this setting is that, without additional knowledge, the evolution of latent reasoning states is weakly constrained and exhibits excessive degrees of freedom, particularly in high-dimensional spaces. To address this issue, we introduce a \emph{graph-conditioned feasibility constraint} that explicitly restricts where latent reasoning can evolve. This design is motivated by a fundamental property of recommender systems: user behavior is inherently collaborative, and given a user's most recent interactions $I_n=(i_{T-n},\dots,i_{T-1})$ where $n$ denotes the window size of the recent interaction, only a limited subset of items is plausibly relevant in the near future~\cite{WuT0WXT19,YingHCEHL18}—namely, those that are $k$-hop reachable on the collaborative item interaction graph $\mathcal{G}$.

From a geometric perspective, this graph-conditioned neighborhood defines a low-dimensional feasible region within the high-dimensional item space. To operationalize this constraint during latent reasoning, we explicitly regulate how reasoning states are translated into item-level beliefs. Specifically, each reasoning state is mapped to an item probability distribution on the item probability simplex, and the graph-conditioned neighborhood restricts valid distributions to those that concentrate probability mass on collaboratively reachable items. This restriction defines a structured subregion of the simplex, which we refer to as the \textbf{collaborative manifold}. Latent reasoning is therefore constrained to evolve along this manifold, rather than freely over the entire simplex as in unconstrained latent refinement (Fig.~\ref{fig:motivation}).

To make the collaborative manifold explicit and tractable, we instantiate it using a finite set of collaboratively reachable items conditioned on the user's most recent interactions. This set defines the feasible support of item probability distributions during latent reasoning. Concretely, we define a finite candidate set:
\begin{equation*}
    \mathcal{C}(I_n,\mathcal{G},k) \subseteq \{I_n\} \cup \mathcal{N}(I_n;\mathcal{G};k).
\end{equation*}
Unless otherwise specified, we use $\mathcal{C}(k)$ as shorthand for $\mathcal{C}(I_n, \mathcal{G}, k)$, since the candidate set is always constructed from the most recent items $I_n$ on the interaction graph $\mathcal{G}$ in this paper. In the next subsection, we derive a training objective from a variational interpretation that leverages this graph-conditioned feasible set to regularize latent reasoning.

\subsection{Variational Training Objective}
\label{sec:variationalobj}

While the manifold constraint provides a strong geometric intuition, optimizing it effectively requires a rigorous mathematical formulation. 
In this subsection, we translate the conceptual constraint from Sec.~\ref{sec:manifoldreason} into \textbf{probabilistic variational inference}. We derive an Evidence Lower Bound (ELBO)-like objective, treating latent reasoning as approximate inference over an intermediate intent variable. This derivation introduces the ``Teacher Prior'' distribution induced from the interaction graph, guiding the implementation of the loss function (Sec.~\ref{sec:implementation}) necessary to train the model.

\vspace{5pt}
\noindent\textbf{Latent Variable Formulation.}
Given a user history $H$, we introduce a discrete latent variable $c \in \mathcal{C}(k)$ representing an intermediate intent prototype that mediates the prediction of the next item. The conditional likelihood of the target item can be written as
\begin{equation*}
p_\theta(i^* \mid H) = \sum_{c \in \mathcal{C}(k)} p_\theta(c \mid H)\, p_\theta(i^* \mid c, H),
\end{equation*}
where $p_\theta(c \mid H)$ is a history-conditioned intent distribution parameterized by $\theta$, and $p_\theta(i^* \mid c, H)$ models the likelihood of the target given the inferred intent. We restrict $c$ to lie in the graph-induced candidate set, reflecting the assumption that user intent at each reasoning step is best characterized by collaboratively reachable items. This design anchors latent intent to observable interaction patterns and aligns with the collaborative manifold constraint.

\vspace{5pt}
\noindent\textbf{Graph-Conditioned Variational Prior.}
Based on the above design, we introduce a graph-conditioned teacher distribution $q(c \mid I_n, \mathcal{G})$, defined over the candidate set $\mathcal{C}(k)$. It encodes prior knowledge about plausible intents reachable from the user's most recent interaction, and is constructed independently of the model parameters $\theta$. Intuitively, $q$ assigns higher probability mass to items that are strongly connected to $I_n$ in the interaction graph.

\vspace{5pt}
\noindent\textbf{ELBO-like Objective.}
For any choice of $q(c \mid I_n, \mathcal{G})$, the log-likelihood admits the following lower bound:
\begin{equation}
\label{equation:elbo-obj}
\begin{aligned}
\log p_\theta(i^* \mid H)
\;\ge\;
&\mathbb{E}_{q(c \mid I_n, \mathcal{G})}
\big[ \log p_\theta(i^* \mid c, H) \big] \\
&-
D_{\mathrm{KL}}\!\left(
q(c \mid I_n, \mathcal{G})
\;\|\;
p_\theta(c \mid H)
\right),
\end{aligned}
\end{equation}
where $D_{KL}(\cdot\;\|\;\cdot)$ is the KL-divergence. The derivation of Eq.~\ref{equation:elbo-obj} is provided in Appendix~\ref{sec:ELBO-derive}. This formulation can be interpreted as an Evidence Lower Bound (ELBO). The first term encourages accurate target prediction under graph-feasible intents, while the KL term regularizes the model's inferred intent distribution to align with the graph-conditioned prior.

\vspace{5pt}
\noindent\textbf{Connection to Context Engineering and Latent Reasoning.}
In practice, the first term in Eq.~\ref{equation:elbo-obj} encourages the model to predict the target item $i^*$ conditioned on both the user history $H$ and the intent prototype $c$. This requires injecting future candidate knowledge into the model's input or conditioning pathway, which closely parallels \emph{Context Engineering} for LLMs~\cite{abs-2507-13334}, where external or structured context is provided to guide prediction and reduce uncertainty. In our setting, the candidate set derived from the interaction graph serves as structured contextual knowledge that narrows the model's predictive focus to collaboratively plausible regions.

Meanwhile, the intent distribution $p_\theta(c \mid H)$ is implicitly induced by latent reasoning states through model's output layer, by projecting each latent state onto item probability simplex. Minimizing the KL divergence term aligns this induced distribution with the graph-conditioned prior, constraining each reasoning step to remain within the collaborative manifold supported by $\mathcal{C}(k)$. The variational regularization explicitly limits the freedom of latent refinement and mitigates latent drift during iterative reasoning.

\subsection{Local Graph Smoothness by KL Distillation}
\label{sec:graphsmoothness}

The variational objective in Eq.~\ref{equation:elbo-obj} introduces a KL-divergence regularization term. To justify why this specific term is effective against latent drift, we provide a theoretical analysis in this subsection. We prove that minimizing this KL term induces a gradient flow that promotes ``local graph smoothness''. This analysis bridges the gap between our probabilistic objective and the geometric manifold constraint, showing how the variational objective explicitly confines the reasoning trajectory by collaboratively feasible items.

\begin{proposition}[Local Graph Smoothness Induced by KL Distillation]
\label{prop:local_smoothness}
Let $\mathcal{C}$ be a finite candidate set and $\mathbf{e}_c \in \mathbb{R}^d$ denote the embedding of item $c \in \mathcal{C}$.
Given a reasoning state $\mathbf{r} \in \mathbb{R}^d$, define the induced predictive distribution as:
\begin{equation*}
P(c \mid H)
= \frac{\exp\!\left(\mathbf{r}^\top \mathbf{e}_c\right)}
{\sum_{c' \in \mathcal{C}} \exp\!\left(\mathbf{r}^\top \mathbf{e}_{c'}\right)}, \quad c \in \mathcal{C}. 
\end{equation*}
Let $Q$ be any fixed teacher distribution supported on $\mathcal{C}$.
Then the KL distillation loss
$\mathcal{L}(\mathbf{r}) = D_\mathrm{KL}\!\left(Q \,\|\, P(\cdot \mid H)\right)$
is differentiable with respect to $\mathbf{r}$, with gradient
\begin{equation*}
\nabla_{\mathbf{r}} \mathcal{L}(\mathbf{r})
=
\mathbb{E}_{P(\cdot \mid H)}[\mathbf{e}_c]
-
\mathbb{E}_{Q}[\mathbf{e}_c]. 
\end{equation*}
\end{proposition}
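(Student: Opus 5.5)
Since $Q$ does not depend on $\mathbf{r}$, the first step is to separate the KL loss into a constant part and a part that depends on $\mathbf{r}$. Writing $z_c(\mathbf{r}) = \mathbf{r}^\top \mathbf{e}_c$ for $c\in\mathcal{C}$ gives
\begin{equation*}
\mathcal{L}(\mathbf{r}) = \sum_{c\in\mathcal{C}} Q(c)\log Q(c) - \sum_{c\in\mathcal{C}} Q(c)\log P(c\mid H),
\end{equation*}
with the convention $0\log 0 = 0$. The first sum is the negative entropy of $Q$ and is constant in $\mathbf{r}$. The loss is therefore, up to that constant, the cross-entropy $-\mathbb{E}_Q[\log P(c\mid H)]$.

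Next I substitute the softmax form, $\log P(c\mid H) = \mathbf{r}^\top\mathbf{e}_c - \log Z(\mathbf{r})$ with $Z(\mathbf{r}) = \sum_{c'\in\mathcal{C}}\exp(\mathbf{r}^\top\mathbf{e}_{c'})$. Because $\sum_c Q(c)=1$, this gives
\begin{equation*}
\mathcal{L}(\mathbf{r}) = \text{const} - \mathbf{r}^\top \mathbb{E}_Q[\mathbf{e}_c] + \log Z(\mathbf{r}).
\end{equation*}
The set $\mathcal{C}$ is finite and every term in $Z$ is strictly positive, so $Z(\mathbf{r})>0$. Hence $\log Z$ is a composition of smooth functions, namely a finite sum of exponentials of linear maps followed by a logarithm on $(0,\infty)$. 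This shows $\mathcal{L}$ is $C^\infty$ in $\mathbf{r}$, which gives differentiability.

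The gradient then follows term by term. The linear term contributes $-\mathbb{E}_Q[\mathbf{e}_c]$. For the log-partition term,
\begin{equation*}
\nabla_{\mathbf{r}}\log Z(\mathbf{r}) = \frac{\sum_{c}\exp(\mathbf{r}^\top\mathbf{e}_c)\,\mathbf{e}_c}{Z(\mathbf{r})} = \sum_c P(c\mid H)\,\mathbf{e}_c = \mathbb{E}_{P(\cdot\mid H)}[\mathbf{e}_c].
\end{equation*}
Adding the two contributions gives the stated gradient.

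This computation is routine and has no real obstacle. The one point that needs care is that $Q$ is only assumed to be supported on $\mathcal{C}$, so it may put zero mass on some candidates. For those $c$, the $0\log 0$ convention makes the term contribute nothing. Meanwhile $P(c\mid H)>0$ for every $c\in\mathcal{C}$, so all the logarithms are finite and the loss is finite everywhere.
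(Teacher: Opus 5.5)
Your proof is correct and follows essentially the same route as the paper's: split off the constant $\sum_{c} Q(c)\log Q(c)$, write $\log P(c\mid H)=\mathbf{r}^\top\mathbf{e}_c-\log Z(\mathbf{r})$, and combine $\nabla_{\mathbf{r}}\log Z(\mathbf{r})=\mathbb{E}_{P(\cdot\mid H)}[\mathbf{e}_c]$ with $\sum_{c} Q(c)=1$. The differences are only cosmetic. You use $\sum_{c} Q(c)=1$ before differentiating rather than after, and you explicitly justify smoothness and the $0\log 0$ convention, which the paper leaves implicit.
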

The proof of Proposition~\ref{prop:local_smoothness} is provided in Appendix~\ref{sec:proof_local_smoothness}.

\vspace{5pt}
\noindent\textbf{Interpretation.}
The distribution $P(\cdot \mid H)$ defines a point on the probability simplex over $\mathcal{C}$, and its embedding
expectation $\mathbb{E}_{P}[\mathbf{e}_c]$ lies in the convex hull of candidate embeddings 
$\mathrm{conv}\{\mathbf{e}_c : c \in \mathcal{C}\}$.
The teacher expectation $\mathbb{E}_{Q}[\mathbf{e}_c]$ lies in the same region. 
Proposition~\ref{prop:local_smoothness} shows that KL distillation induces a gradient flow that directly moves the prediction barycenter toward the teacher barycenter within this graph-restricted convex hull.
As a result, latent reasoning states are encouraged to evolve such that their induced predictions remain confined to a graph-local embedding region defined by $\mathcal{C}$. In ManCAR, $\mathcal{C}$ is a graph-conditioned candidate set derived from the most recent interactions, and $Q$ is a scheduled teacher distribution supported on this set. Together, they impose a \emph{local graph smoothness prior} on latent reasoning: each refinement step reduces uncertainty while remaining restricted to collaboratively reachable items. Progressive sharpening of the teacher distribution yields a stable coarse-to-fine trajectory on the simplex, mitigating latent drift during multi-step reasoning.

\subsection{Implementation of ManCAR Objective}
\label{sec:implementation}

Having established the theoretical validity of the variational training objective, we now turn to the implementation of ManCAR. 
In this subsection, we describe how the variational objective in Eq.~\ref{equation:elbo-obj} is instantiated in practice. 
Our implementation is designed to faithfully realize the graph-conditioned manifold constraint while remaining compatible with standard latent reasoning paradigms.

\vspace{5pt}
\noindent\textbf{Item Interaction Graph.}
We construct the item interaction graph using a standard Swing-style item-to-item co-interaction algorithm that is widely adopted in industrial systems~\cite{abs-2010-05525}. Each node corresponds to an item, and weighted edges encode collaborative strength measured by co-interaction frequency. As this graph construction follows established practice and is not a contribution of this work, we defer algorithmic details to Appendix~\ref{sec:swing_build}. Concretely, the resulting graph $\mathcal{G}$ associates each directed edge $(i\rightarrow j)$ with a weight $w_{ij}$, indicating the strength of collaborative relevance of item $j$ with respect to item $i$.

\vspace{5pt}
\noindent\textbf{Teacher Prior Construction.}
The variational objective in Equation~\eqref{equation:elbo-obj} requires a teacher distribution $q(c \mid I_n, \mathcal{G})$ defined over the candidate set $\mathcal{C}(k)$. This distribution encodes prior knowledge about plausible intent prototypes that are collaboratively reachable from the user's most recent interactions. In practice, we consider the following strategy to construct the teacher prior efficiently:

\vspace{5pt}
\noindent\emph{Rank-Based Distribution Mass Assignment (RDMA).}
We construct the teacher prior based on the relative ranking of candidates. The target item $i^*$ is always assigned a rank $0$, while the remaining candidates in $\mathcal{C}(k)$ are ranked in descending order of their graph edge weights $w_{I_{n},c}$ (from a recent interacted item in $I_n$ to its neighbor $c$), receiving ranks $[1,2,\dots]$. The probability mass is then assigned using a softmax over negative ranks,
\begin{equation}
\label{eqn:q_strategy_2}
q(c \mid I_{n}, \mathcal{G})
=
\frac{\exp(-\mathrm{rank}(c)/\gamma)}
{\sum_{n \in \mathcal{C}(k)} \exp(-\mathrm{rank}(n)/\gamma)},
\end{equation}
where $\gamma>0$ controls the sharpness of the teacher distribution. Using negative ranks ensures that higher-ranked (i.e., more strongly connected) items and the target receive the larger probability mass.

This strategy ensures that the teacher prior is strictly supported on the collaborative neighborhood while emphasizing the target item. By default, we assign zero probability mass to non-candidate items, resulting in a teacher distribution that lies on a sparse region of the item probability simplex with only a small number of active entries. Alternatively, a small amount of probability mass can be distributed over non-candidate items as a form of label smoothing, which we find does not materially affect the main conclusions.
\textbf{A dynamic scheduling} is applied to the teacher prior distributions to guide latent reasoning progressively toward the target by adjusting the concentration of the teacher distribution across reasoning steps. As this scheduling mechanism is closely tied to adaptive test-time termination, we defer its detailed formulation to Sec.~\ref{sec:adaptive_reasoning}.

\vspace{5pt}
\noindent\textbf{Training Loss.}
The complete training objective instantiates the ELBO derived in Eq.~\ref{equation:elbo-obj} and consists of a target prediction loss and a graph-conditioned manifold regularization loss, both applied at each reasoning step.

\vspace{5pt}
\noindent\emph{Latent Reasoning and Decoding.}
At reasoning step $t'$, the model produces a latent reasoning state
$\mathbf r_{t'} \in \mathbb R^d$, obtained by iteratively refining the
initial state $\mathbf r_1 = \mathbf h_{T-1}=f_{\theta}(H)$ through a shared reasoning module $f_{\theta}(\cdot)$. Concretely, the refinement follows $r_{t'}=f_{\theta}(H;r_{1:t'-1})$. Then each latent reasoning state $\mathbf{r}_{t'}$ is projected onto the item space to produce logits:
\begin{equation*}
\mathbf{z}_{t'} = \mathbf{r}_{t'}^\top \mathbf{E},
\end{equation*}
where $\mathbf{E}\in\mathbb{R}^{d\times|\mathcal{I}|}$ denotes the item embedding matrix. The logits define an item probability distribution over $\mathcal{I}$ via a temperature-scaled softmax,
\begin{equation}
p_\theta^{(t')}(i \mid H)
=
\frac{\exp\!\left(\mathbf{z}_{t',i} / \tau_{t'}\right)}
{\sum_{j \in \mathcal{I}} \exp\!\left(\mathbf{z}_{t',j} / \tau_{t'}\right)}. \nonumber
\end{equation}
We use this same distribution to represent the intent distribution $p_\theta^{(t')}(c \mid H)$ by restricting $c$ to the candidate set $\mathcal{C}(k)$.

\vspace{5pt}
\noindent\emph{Target Prediction Loss.}
Following the ELBO in Eq.~\ref{equation:elbo-obj}, the target prediction term at reasoning step $t'$ is given by

\begin{equation*}
\mathcal L_{\mathrm{main}}^{(t')}
=
-
\mathbb E_{c \sim q(\cdot \mid I_{n},\mathcal G)}
\big[
\log p_\theta^{(t')}(i^* \mid H, c)
\big].
\end{equation*}
In practice, explicitly marginalizing over $c \in \mathcal C(k)$ at each reasoning step is costly and unnecessary. Instead, we adopt a standard conditioning strategy by exposing the entire candidate set $\mathcal C(k)$ to the model as additional contexts (like context engineering). Concretely, we approximate the above expectation by:
\begin{equation}
\label{eqn:loss_main_t}
\mathcal L_{\mathrm{main}}^{(t')}
=
-
\log p_\theta^{(t')}(i^* \mid H, \mathcal C(k)),
\end{equation}
where $\mathcal C(k)$ is injected as auxiliary input alongside the history $H$.

\vspace{5pt}
\noindent\emph{Graph-Conditioned Manifold Regularization.}
To enforce graph-conditioned feasibility, the induced distribution at each step is regularized toward the teacher prior via

\begin{equation}
\label{eqn:loss_kl_t}
\mathcal{L}_{\mathrm{reg}}^{(t')}
=
D_{\mathrm{KL}}\!\left(
q(c \mid I_{n}, \mathcal{G})
\;\|\;
p_\theta^{(t')}(c \mid H)
\right).
\end{equation}
This term restricts latent reasoning trajectories to remain within the collaborative manifold defined by $\mathcal{C}(k)$ and mitigates latent drift. To reduce computation, the student distribution $p_\theta^{(t')}(c \mid H)$ can be obtained from the same forward pass used for target prediction 

\vspace{5pt}
\noindent\emph{Overall Objective.}
The complete training objective is as follows: % aggregates step-wise losses
\begin{equation}
\label{eqn:loss_overall}
\mathcal{L}
=
\sum_{t'=1}^{T'}
\left(
\mathcal{L}_{\mathrm{main}}^{(t')}
+
\lambda \mathcal{L}_{\mathrm{reg}}^{(t')}
\right),
\end{equation}
where $\lambda$ controls the strength of graph-conditioned regularization. Appendix~\ref{sec:app_mancar_algo} provides the detailed training algorithm of ManCAR.

\subsection{Training Scheduling and Adaptive Test-Time Reasoning}
\label{sec:adaptive_reasoning}

We now introduce a dynamic scheduling mechanism that sharpens the teacher distribution over time and present a theoretical proposition regarding bounded error. This analysis not only guides the training schedule but also naturally motivates our convergence-based stopping criterion, enabling adaptive test-time reasoning. Appendix~\ref{sec:app_mancar_algo} provides the detailed adaptive reasoning algorithm.

\begin{proposition}[Continuation tracking under contraction and bounded teacher drift]
\label{prop:continuation_tracking}
Fix a query $H$ and a finite candidate set $\mathcal{C}=\mathcal{C}(k)$. Let $\Delta(\mathcal{C})$ denote the item probability simplex over $\mathcal{C}$. Consider distribution sequences $\{p_{t'}\}_{t'\ge 1}\subset \Delta(\mathcal{C})$ (student) and
$\{q_{t'}\}_{t'\ge 1}\subset \Delta(\mathcal{C})$ (scheduled teacher), and let
\begin{equation}
d_{\mathrm{TV}}(p,q) := \frac{1}{2}\sum_{c\in\mathcal{C}} |p(c)-q(c)|  \nonumber
\end{equation}
be the total variation distance. We assume:
\begin{enumerate}[leftmargin=*]
\item \textbf{Stepwise contraction toward the current teacher:}
There exists $\lambda\in(0,1)$ such that for all $t'\ge 1$,
\begin{equation}
d_{\mathrm{TV}}(p_{t'+1}, q_{t'}) \le (1-\lambda)\, d_{\mathrm{TV}}(p_{t'}, q_{t'}).  
\label{eq:assump_contraction}
\end{equation}

\item \textbf{Bounded teacher drift (controlled schedule):}
There exists $\delta\ge 0$ such that for all $t'\ge 1$,
\begin{equation}
d_{\mathrm{TV}}(q_{t'+1}, q_{t'}) \le \delta.  
\label{eq:assump_drift}
\end{equation}
\end{enumerate}

Then for all $t'\ge 1$,
\begin{equation}
d_{\mathrm{TV}}(p_{t'}, q_{t'})
\le
(1-\lambda)^{t'-1}\, d_{\mathrm{TV}}(p_1,q_1) + \frac{\delta}{\lambda}.  
\label{eq:tracking_bound}
\end{equation}
\end{proposition}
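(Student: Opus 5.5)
The plan is to derive a one-step recursion for the tracking error $e_{t'} := d_{\mathrm{TV}}(p_{t'}, q_{t'})$ and then unroll it. Since $d_{\mathrm{TV}}$ is half the $\ell_1$ norm on $\Delta(\mathcal{C})$, it is a metric and obeys the triangle inequality. I would first insert the previous teacher $q_{t'}$ between $p_{t'+1}$ and $q_{t'+1}$:
\begin{equation*}
e_{t'+1} = d_{\mathrm{TV}}(p_{t'+1}, q_{t'+1}) \le d_{\mathrm{TV}}(p_{t'+1}, q_{t'}) + d_{\mathrm{TV}}(q_{t'}, q_{t'+1}).
\end{equation*}
The first term is bounded by the contraction assumption~\eqref{eq:assump_contraction}, giving $(1-\lambda)e_{t'}$. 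The second term is bounded by the drift assumption~\eqref{eq:assump_drift}, using symmetry of $d_{\mathrm{TV}}$, giving $\delta$. Together these yield the linear recursion $e_{t'+1} \le (1-\lambda)e_{t'} + \delta$ for all $t'\ge 1$.

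Next I would unroll this recursion by induction on $t'$, with the hypothesis
\begin{equation*}
e_{t'} \le (1-\lambda)^{t'-1}e_1 + \delta\sum_{j=0}^{t'-2}(1-\lambda)^j.
\end{equation*}
The base case $t'=1$ is trivial, since the sum is empty. For the inductive step, multiply the hypothesis by $(1-\lambda)\ge 0$, which preserves the inequality, and add $\delta$; this shifts the geometric sum by one index. Finally, because $\lambda\in(0,1)$, the partial geometric sum is at most $\sum_{j\ge 0}(1-\lambda)^j = 1/\lambda$. This yields~\eqref{eq:tracking_bound}.

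I expect no step to be a genuine obstacle. The only points needing care are applying the triangle inequality with the intermediate teacher $q_{t'}$ rather than $q_{t'+1}$, so that the contraction hypothesis (stated relative to the \emph{current} teacher) applies directly, and keeping track of the indices so that the exponent comes out as $t'-1$. Nonnegativity of $1-\lambda$ is what justifies multiplying inequalities during the induction.
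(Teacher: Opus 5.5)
Your proposal is correct and follows the paper's proof essentially step for step: the triangle inequality through the intermediate teacher $q_{t'}$, the contraction and drift bounds giving $e_{t'+1}\le(1-\lambda)e_{t'}+\delta$, then unrolling and bounding the geometric sum by $1/\lambda$. Your explicit induction and your note on symmetry of $d_{\mathrm{TV}}$ are slightly more careful than the paper's ``unrolling the recursion,'' but the argument is the same.
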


We provide the proof of Proposition~\ref{prop:continuation_tracking} in Appendix~\ref{sec:proof_continuation_tracking}. Proposition~\ref{prop:continuation_tracking} shows that, if each refinement step contracts the student toward the current teacher and the teacher distribution evolves smoothly, then the student distribution can track a progressively changing teacher with bounded error. This provides a formal motivation for using a coarse-to-fine teacher schedule during training.

The proposition is stated in total variation (TV) distance to leverage its metric properties. In ManCAR, refinement is trained via KL distillation rather than TV. To bridge this gap, we invoke
Pinsker's inequality, which guarantees that for any distributions $p,q$ on $\mathcal{C}$,
\begin{equation}
d_{\mathrm{TV}}(p,q)
\le
\sqrt{\tfrac{1}{2} D_{\mathrm{KL}}(q \,\|\, p)}. \nonumber
\end{equation}
Hence, minimizing the KL loss ensures a small student-teacher mismatch in TV, providing a conservative stability guarantee for the continuation tracking behavior described in
Proposition~\ref{prop:continuation_tracking}.

\vspace{5pt}
\noindent\textbf{Teacher Scheduling Strategies.}
We extend the teacher construction (Sec.~\ref{sec:implementation}) into a scheduled
mechanism that generates a smoothly evolving sequence of teacher distributions across reasoning steps.

\vspace{5pt}
\noindent\emph{Adjustment for Strategy RDMA.} We define the teacher as:
\begin{equation}
q_{t'}(c)\propto \exp\!\big(-\mathrm{rank}(c)/\gamma_{t'}\big),
\qquad
\gamma_{t'}=\gamma_{\text{base}}\cdot(T'-t'+1), \nonumber
\end{equation}
with $\gamma_{\text{base}}\ge 1$. As $\gamma_{t'}$ decreases linearly, the teacher distribution
transitions smoothly from a diffuse graph-aware prior to a sharply peaked distribution centered
on the target. Properly tuning $\gamma_{\text{base}}$ and total steps $T$ yields a smoothly evolving teacher distribution with bounded drift across refinement steps (satisfying Assumption (2) in Proposition~\ref{prop:continuation_tracking}).

\vspace{5pt}
\noindent\textbf{Connection to Adaptive Test-Time Reasoning.}
The continuation view provided by Proposition~\ref{prop:continuation_tracking} directly motivates
adaptive termination at test time. Since the student distribution $p_{t'}$ tracks the scheduled
teacher with bounded error, convergence of successive student distributions indicates that further
refinement yields diminishing returns. We therefore terminate reasoning early when the change
between consecutive steps falls below a threshold, e.g.,
\begin{equation}
D_{\mathrm{KL}}\!\left(p_{t'-1} \,\|\, p_{t'}\right) < \varepsilon. \nonumber
\end{equation}

\vspace{5pt}
\noindent\textbf{Scheduling the Main Prediction Loss.}
In addition to scheduling the teacher prior, we apply a step-dependent temperature schedule to the main target prediction loss to control the magnitude of distributional updates induced by target supervision at each reasoning step. Concretely, we use an exponential temperature schedule: 

\begin{equation*}
\tau_{t'} = \tau_{\mathrm{base}} \cdot t'^{\alpha},
\end{equation*}
where $\alpha>1$ controls $\tau_{t'}$'s increasing magnitude, which yields an increasing temperature sequence across reasoning steps with a flexible initial temperature base $\tau_{\mathrm{base}}$.

From the continuation perspective formalized in Proposition~\ref{prop:continuation_tracking},
supervising all refinement steps with an identical, sharply peaked target loss may induce overly
large early updates that violate the bounded-drift and contraction conditions, potentially
destabilizing manifold-constrained reasoning.
By starting with a low effective temperature, early refinement steps are encouraged to make
conservative progress (near the local neighborhood of recent user interactions), while remaining within the graph-consistent manifold.

This design is related in spirit to progressive refinement losses such as the PRL mechanism in ReaRec~\cite{abs-2503-22675}, but differs in directionality. Whereas ReaRec adopts a \textbf{decreasing} temperature schedule to accelerate early-stage convergence, our \textbf{increasing}-temperature design aligns with our theoretical analysis and supports stable multi-step reasoning and adaptive test-time termination.

\vspace{5pt}
\noindent\textbf{Stabilizing Optimization via Latent State Norm Rescaling.}
In addition to scheduling-based control, we apply normalization to stabilize multi-step latent reasoning.
After each refinement step, we rescale the latent reasoning state as:
\[
\mathbf{h} \;\leftarrow\; \phi \cdot \frac{\mathbf{h}}{\|\mathbf{h}\|}\cdot \mathrm{avg}(\mathbf{E}),
\]
where $\mathrm{avg}(\mathbf{E})$ denotes the average norm of item embeddings, and $\phi$ is a learnable affine scaling parameter.

This operation aligns the scale of latent states with that of the item embedding space. This rescaling alleviates the burden on the Transformer to simultaneously accommodate heterogeneous modalities with mismatched norms between original input items and latent reasoning states.
By keeping latent states on a scale comparable to item embeddings, this normalization mitigates empirical norm growth with (recursive) depth~\cite{layernormscaling}, improves stability in long-horizon reasoning, and complements the manifold-constrained and continuation-based design of ManCAR. In particular, it helps maintain a well-conditioned softmax geometry during refinement, which empirically supports the stepwise contraction behavior assumed in Proposition~\ref{prop:continuation_tracking} (facilitating Assumption (1)).

\section{Experiments}

Our empirical study is guided by the following \textbf{research questions}: \emph{(1) Overall performance.}
How does ManCAR perform compared with strong sequential recommendation baselines across standard
benchmarks? \emph{(2) Effect of teacher scheduling and adaptive reasoning.} How do teacher scheduling and adaptive termination shape step-wise refinement behavior and enable near-ceiling performance during inference? \emph{(3) Ablation analysis.}
What is the impact of individual components in ManCAR? \emph{(4) Parameter sensitivity.} How sensitive is ManCAR to key hyperparameters? \emph{(5) KL-Based Halting Analysis.} How does the KL divergence between steps reflect the stability of the reasoning trajectory?
\emph{(6) Attention Visualization Analysis.} What do attention patterns reveal about the information flow within the manifold-constrained design?

\subsection{Experimental Setup}

\subsubsection{Datasets and Preprocess}
\label{sec:appx_data}

\begin{table}[t]
\centering
\small
\caption{Dataset statistics.}
\vspace{-10pt}
\label{tab:dataset_stats}
\resizebox{\linewidth}{!}{
\begin{tabular}{lccccc}
\toprule
Dataset & \#Users & \#Items & \#Interactions & \begin{tabular}{@{}c@{}}\#Avg.\\ Inter./User\end{tabular} & \begin{tabular}{@{}c@{}}\#Avg.\\ Inter./Item\end{tabular}  \\
\midrule
CDs     & 35,238  & 87,969  & 943,399   & 26.77 & 10.72 \\
Video   & 54,001  & 22,735  & 562,193   & 10.41 & 24.73 \\
Office  & 118,617 & 64,679  & 1,116,568   & 9.41  & 17.3   \\
Arts    & 112,401 & 77,596  & 1,180,363   & 10.50  & 15.21   \\
Music   & 15,685  & 21,171  &  221,980   & 14.15  & 10.49   \\
Toys    & 244,231 & 134,676 & 2,515,203 & 10.30 & 18.68  \\
Grocery & 246,304 & 119,860 & 2,788,430 & 11.32  & 23.26  \\
\bottomrule
\end{tabular}
}
\end{table}

We evaluate ManCAR on seven sub-category datasets from the Amazon 2023 Reviews corpus~\cite{abs-2403-03952}: CDs \& Vinyl (CDs), Video \& Games (Video), Office Products (Office), Arts, Crafts \& Sewing (Arts),
Grocery \& Gourmet Food (Grocery), Musical Instruments (Music), and Toys \& Games (Toys). 
Tab.~\ref{tab:dataset_stats} provides data statistics. 
Following prior work~\cite{abs-2503-22675,abs-2505-16865,abs-2601-03153}, user--item interactions with
ratings above 3 are treated as positive feedback.

To improve data quality, we remove users with fewer than 10 interactions in CDs and fewer than 5 interactions in the remaining datasets. We adopt the official absolute-timestamp split provided by the corpus.\footnote{https://amazon-reviews-2023.github.io/data\_processing/5core.html} Consistent with previous studies~\cite{abs-2503-22675,abs-2505-16865,abs-2601-03153}, we truncate each user’s interaction history to a maximum length of 50.

\subsubsection{Evaluation Metrics.}
 
To evaluate the performance of our proposed model and the baselines, we employ two widely-used metrics: Recall@$K$ and Normalized Discounted Cumulative Gain (NDCG@$K$), with $K \in \{5, 10\}$.
Specifically, Recall@$K$ measures the model's ability to include the ground-truth item within the top-$K$ recommendation list, reflecting its retrieval coverage.
NDCG@$K$ further assesses the ranking quality by assigning higher weights to items at higher positions, thereby rewarding models that prioritize the correct item in more prominent ranks.

\subsubsection{Baselines.}
We compare ManCAR with representative state-of-the-art baselines spanning different modeling paradigms. Specifically, we include:

\emph{(1) SASRec}~\cite{KangM18}: Utilizing a unidirectional Transformer encoder, SASRec represents users by the final item in their interaction sequence.

\emph{(2) BERT4Rec}~\cite{sun2019bert4rec}: adopts a bidirectional approach inspired by BERT, training the model to reconstruct masked items within the sequence.

\emph{(3) ContextBERT4Rec}: extends BERT4Rec by using the same context engineering as ManCAR.

\emph{(4) ReaRec-ERL}~\cite{abs-2503-22675}: As a pioneer in latent space reasoning, this model treats the reasoning process as a collective trajectory. Instead of relying on a single state, it synthesizes the implicit information from all autoregressive steps using a mean pooling mechanism to form a comprehensive user representation.

\emph{(5) ReaRec-PRL}~\cite{abs-2503-22675}: In contrast to ERL, this variant emphasizes iterative optimization. It leverages contrastive learning with noise injection to progressively distill the latent representation, discarding intermediate states to rely solely on the converged output of the final reasoning step.

\emph{(6) LARES}~\cite{abs-2505-16865}: This framework introduces pre-blocks and core-blocks. To maximize reasoning fidelity, it adopts a hybrid training pipeline that sequentially applies self-supervised pre-training followed by reinforcement learning-based fine-tuning.

\emph{(7) PLR}~\cite{abs-2601-03153}: A width-scaled (parallel) latent reasoning framework for sequential recommendation that launches multiple parallel reasoning streams via learnable trigger tokens, enforces inter-stream diversity with global reasoning regularization, and adaptively fuses the stream outputs (mixture-of-streams) to improve next-item prediction.

\subsubsection{Implementation.}
We conduct all experiments on eight NVIDIA 3090 GPUs. To ensure a fair comparison, we set the embedding size and batch size for all methods to 256 and 512, respectively. We optimize all models using the Adam optimizer with a learning rate of 0.001. For baselines without open-source code, we conducted our own implementation. For those with available source code, we utilized the official implementations. All baselines were tuned via grid search based on the hyperparameters specified in their original papers, and the optimal results are reported. To mitigate overfitting, we employ early stopping, terminating training if NDCG@10 on the validation set shows no improvement for 5 consecutive epochs.
Following prior work \cite{abs-2503-22675}, we adopt a pre-norm Transformer backbone. It consists of two Transformer layers, each with two-head multi-head self-attention and GeLU activation.

\begin{table*}[t]
\caption{Performance comparison on seven datasets. The best results are in \textbf{bold} and the second best results are \underline{underlined}.}
\vspace{-10pt}
\label{tab:performance_cross}
\centering
\small
\begin{tabular*}{\textwidth}{@{\extracolsep{\fill}}ll|ccccccc|cc}

\toprule
Dataset & Metric & SASRec & BERT4Rec & ContextBERT4Rec & ERL & PRL & PLR & LARES & \textbf{ManCAR} & Uplift \\
\midrule
\multirow{4}{*}{CDs} 
& NDCG@5  & 0.0098 & 0.0110 & 0.0148 & 0.0099 & 0.0122 & 0.0131 & \underline{0.0159} & \textbf{0.0198} & 24.53~\% \\
& NDCG@10 & 0.0132 & 0.0130 & 0.0182 & 0.0129 & 0.0149 & 0.0168 & \underline{0.0192} & \textbf{0.0282} & 46.88~\% \\
& Recall@5    & 0.0181 & 0.0207 & 0.0243 & 0.0190 & 0.0227 & \underline{0.0249} & 0.0235 & \textbf{0.0346} & 38.96~\% \\
& Recall@10   & 0.0286 & 0.0269 & \underline{0.0391} & 0.0283 & 0.0315 & 0.0363 & 0.0351 & \textbf{0.0516} & 31.97~\% \\
\midrule
\multirow{4}{*}{Video} 
& NDCG@5  & 0.0083 & 0.0070 & 0.0119 & 0.0149 & \underline{0.0165} & 0.0159 & 0.0162 & \textbf{0.0217} & 31.52~\% \\
& NDCG@10 & 0.0104 & 0.0091 & 0.0163 & 0.0187 & 0.0215 & 0.0210 & \underline{0.0219} & \textbf{0.0275} & 25.57~\% \\
& Recall@5    & 0.0164 & 0.0131 & 0.0190 & 0.0217 & 0.0253 & \underline{0.0312} & 0.0277 & \textbf{0.0339} & 8.65~\% \\
& Recall@10   & 0.0226 & 0.0196 & 0.0330 & 0.0336 & 0.0410 & \underline{0.0467} & 0.0455 & \textbf{0.0521} & 11.56~\% \\
\midrule
\multirow{4}{*}{Office} 
& NDCG@5  & 0.0063 & 0.0061 & 0.0076 & 0.0079 & 0.0071 & 0.0082 & \underline{0.0090} & \textbf{0.0108} & 20.00~\% \\
& NDCG@10 & 0.0077 & 0.0078 & 0.0093 & 0.0107 & 0.0096 & 0.0109 & \underline{0.0115} & \textbf{0.0133} & 15.65~\% \\
& Recall@5    & 0.0113 & 0.0109 & 0.0129 & 0.0144 & 0.0127 & 0.0139 & \underline{0.0161} & \textbf{0.0174} & 8.07~\% \\
& Recall@10   & 0.0157 & 0.0160 & 0.0184 & 0.0229 & 0.0205 & 0.0224 & \underline{0.0240} & \textbf{0.0250} & 4.17~\% \\
\midrule
\multirow{4}{*}{Arts} 
& NDCG@5  & 0.0019 & 0.0014 & 0.0044 & 0.0052 & 0.0072 & 0.0068 & \underline{0.0077} & \textbf{0.0087} & 12.99~\% \\
& NDCG@10 & 0.0026 & 0.0023 & 0.0059 & 0.0082 & 0.0097 & 0.0092 & \underline{0.0101} & \textbf{0.0114} & 12.87~\% \\
& Recall@5    & 0.0037 & 0.0028 & 0.0082 & 0.0086 & 0.0118 & 0.0106 & \underline{0.0119} & \textbf{0.0141} & 18.49~\% \\
& Recall@10   & 0.0057 & 0.0055 & 0.0129 & 0.0177 & \underline{0.0196} & 0.0181 & \underline{0.0196} & \textbf{0.0225} & 14.80~\% \\
\midrule
\multirow{4}{*}{Music} 
& NDCG@5  & 0.0017 & 0.0014 & 0.0049 & 0.0066 & \underline{0.009} & 0.0058 & {0.0063} & \textbf{0.0097} & 7.78~\% \\
& NDCG@10 & 0.0027 & 0.002 & 0.0059 & 0.0095 & \underline{0.0117} & 0.0084 & {0.0092} & \textbf{0.0120} & 2.56~\% \\
& Recall@5    & 0.0033 & 0.0026 & 0.0086 & 0.0124 & \underline{0.0138} & 0.0111 & {0.0117} & \textbf{0.0147} & 6.52~\% \\
& Recall@10   & 0.0061 & 0.0046 & 0.0118 & 0.0215 & \textbf{0.0225} & 0.0191 & {0.0205} & \underline{0.0217} & -3.56~\% \\

\midrule
\multirow{4}{*}{Toys} 
& NDCG@5  & 0.0037 & 0.0045 & 0.0051 & 0.0054 & 0.0046 & 0.0052 & \underline{0.0061} & \textbf{0.0086} & 40.98~\% \\
& NDCG@10 & 0.0048 & 0.0059 & 0.0063 & 0.0073 & 0.0064 & 0.0068 & \underline{0.0082} & \textbf{0.0108} & 31.71~\% \\
& Recall@5    & 0.0072 & 0.0086 & 0.0089 & 0.0102 & 0.0082 & 0.0096 & \underline{0.0115} & \textbf{0.0136} & 18.26~\% \\
& Recall@10   & 0.0104 & 0.0127 & 0.0126 & 0.0160 & 0.0136 & 0.0148 & \underline{0.0180} & \textbf{0.0203} & 12.78~\% \\

\midrule
\multirow{4}{*}{Grocery} 
& NDCG@5    & 0.0028 & 0.0026 & 0.0044 & 0.0062 & 0.0072 & \underline{0.0078} & 0.0063 & \textbf{0.0095} & 21.79~\% \\
& NDCG@10   & 0.0038 & 0.0035 & 0.0059 & 0.0087 & 0.0098 & \underline{0.0105} & 0.0088 & \textbf{0.0118} & 12.38~\% \\
& Recall@5  & 0.0055 & 0.0085 & 0.0079 & 0.0102 & 0.0118 & \underline{0.0124} & 0.0119 & \textbf{0.0149} & 20.16~\% \\
& Recall@10 & 0.0085 & 0.0078 & 0.0126 & 0.0180 & 0.0197 & \underline{0.0218} & 0.0197 & \textbf{0.0220} & 0.92~\% \\

\bottomrule
\end{tabular*}
\end{table*}

\subsection{Overall Performance (Tab.~\ref{tab:performance_cross}).}

\noindent
\textbf{ManCAR outperforms all baselines.}
ManCAR achieves the best performance across all datasets and evaluation metrics, demonstrating consistent improvements in both ranking and retrieval quality. Compared with the second-best method on each dataset, ManCAR delivers up to a 46.88\% relative
improvement on certain metrics. Notably, the gains are more pronounced on NDCG, indicating that ManCAR is particularly effective at ranking relevant items higher, which reflects a stronger ability to capture and refine user intent.

\vspace{5pt}
\noindent
\textbf{ContextBERT4Rec outperforms BERT4Rec}, highlighting the benefit of graph-induced context.
ContextBERT4Rec augments the input sequence with the same graph-conditioned candidate set used by ManCAR, enabling the model to leverage collaborative signals beyond the independent raw user interaction sequence. Its consistent improvement over BERT4Rec suggests that incorporating graph-induced
context serves as an effective form of context engineering for sequential recommendation.

\vspace{5pt}
\noindent
\textbf{Explicit latent reasoning consistently improves sequential recommendation.}
ContextBERT4Rec represents the strongest non-reasoning baseline by incorporating graph-conditioned context into the input.
Across all datasets, ManCAR achieves notable gains over ContextBERT4Rec, demonstrating that explicit multi-step reasoning provides additional modeling capacity beyond contextual encoding alone. More broadly, all reasoning-based methods (ManCAR, ERL, PRL, PLR, and LARES) outperform non-reasoning baselines like SASRec and BERT4Rec, suggesting that iterative refinement of intermediate
hypotheses enables more effective uncertainty resolution and user intent modeling, particularly in sparse or challenging settings.

\begin{table}[t]
    \centering
    \caption{Best performing step setting of reasoning-based methods on four datasets. See Appendix~\ref{sec:allocation} for full results.}
    \label{tab:step_partial}
    \vspace{-10pt}
    \resizebox{\linewidth}{!}{
    \begin{tabular}{ll|cccc|c}
    \hline
    Dataset                  & Reason step & ERL & PRL & PLR & LARES & ManCAR \\ \hline
    \multirow{2}{*}{CDs}     & Train step  & 2   & 2   & 3 &4   & 5      \\
                             & Infer step  & 2   & 2   & 3 &4  & 1.84   \\ \hline
    \multirow{2}{*}{Arts}    & Train step  & 2   & 2   & 1 &4  & 1      \\
                             & Infer step  & 2   & 2   & 1 &4  & 1      \\ \hline
    \multirow{2}{*}{Toys}    & Train step  & 1   & 2   & 1 &4  & 4      \\
                             & Infer step  & 1   & 2   & 1 &4  & 3.58   \\ \hline
    \multirow{2}{*}{Grocery} & Train step  & 2   & 2   & 3 &4  & 2      \\
                             & Infer step  & 2   & 2   & 3 &4  & 1.74   \\ \hline
    \end{tabular}
    }
    \vspace{-10pt}
\end{table}

\vspace{5pt}
\noindent
\textbf{ManCAR consistently outperforms existing latent reasoning approaches.}
Across all datasets, ManCAR achieves consistent gains over prior reasoning-based methods such as ERL, PRL, PLR, and LARES.
While these methods introduce latent refinement or progressive reasoning, they typically lack explicit constraints on how reasoning trajectories evolve. In contrast, ManCAR integrates graph-conditioned manifolds, scheduled teacher supervision, and adaptive test-time control, which together provide a more structured and stable reasoning process. By explicitly controlling the feasible manifold region and stepwise dynamics of latent refinement, ManCAR
is better able to exploit collaborative signals and avoid unstable or suboptimal reasoning paths, leading to better performance with varying data sparsity and sequence lengths.

\vspace{5pt}
\noindent
\textbf{Performance gains increase with higher interaction density.}
ManCAR exhibits larger performance margins over the second-best baseline on datasets with higher interaction density (average interactions per item).
For instance, improvements are more pronounced on Video and Toys than on Music and Arts. This trend suggests that ManCAR benefits from reduced sparsity, where multi-step reasoning can more effectively refine user intent by leveraging a more reliable item interaction graph and richer collaborative signals. When interactions are sparse, graph edge connection become noisier, which limits the advantage of graph-conditioned reasoning over strong baselines. Improving robustness under limited preference evidences (cold start) is left for future work.

\begin{figure}[t]  
  \centering
  \includegraphics[width=\columnwidth]{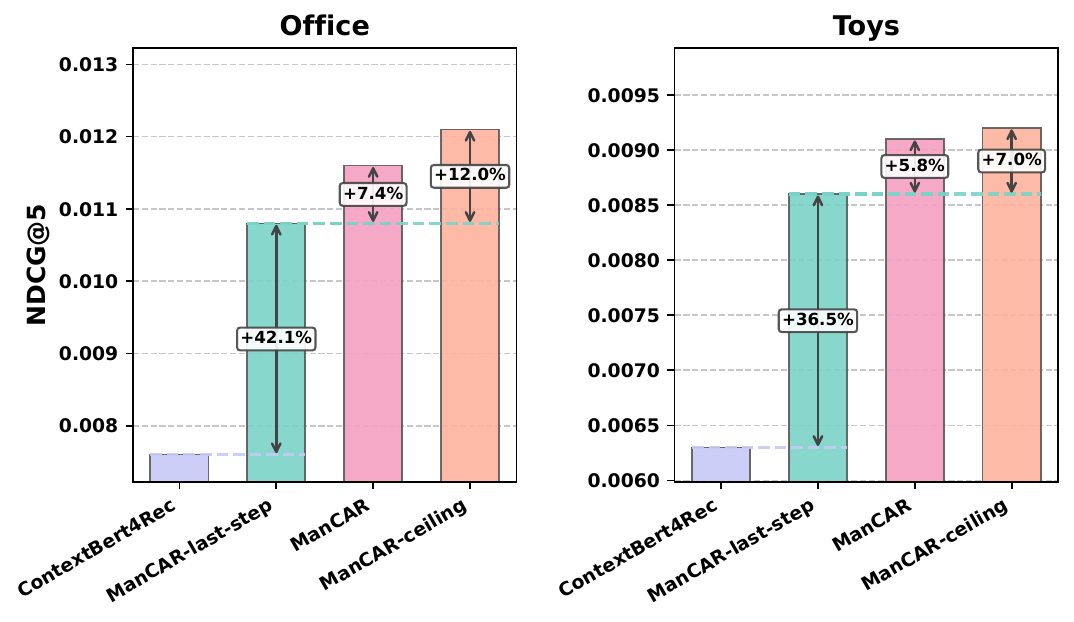}
  \vspace{-20pt}
  \caption{Performance ceiling analysis on Office and Toys.}
  \vspace{-10pt}
  \label{fig:ndcg_ceiling}
\end{figure}

\subsection{In-Depth Analysis in Adaptive Reasoning}
\label{sec:in-depthana}

We analyze ManCAR's adaptive reasoning ability from two angles. 

\vspace{5pt}
\noindent\textbf{Data-Aware Train-Test Compute Allocation.}
Tab.~\ref{tab:step_partial} summarizes the reasoning-step configurations at which different reasoning methods achieve their best performance. Baselines adopt \emph{identical and shallow} reasoning depths at both stages,
typically limited to 2-3 steps, regardless of data characteristics. This indicates that reasoning depth is treated as a static architectural hyperparameter.

In contrast, ManCAR exhibits data-aware and asymmetric train-test computation. The optimal number of training and inference steps varies substantially across datasets, reflecting differences in data sparsity and sequence complexity.
On datasets with complex interaction patterns, such as CDs and Toys, ManCAR employs deeper reasoning and achieves significantly larger performance gains, while prior methods are unable to
adapt beyond 3 steps. Conversely, on simpler datasets such as Arts and Grocery, ManCAR stops early at inference,
avoiding unnecessary computation while still outperforming baselines which over-allocate reasoning steps.

Overall, these results indicate that ManCAR performs genuine iterative refinement with adaptive inference depth, enabling an effective balance between reasoning expressiveness and computational efficiency across diverse data properties.

\vspace{5pt}
\noindent\textbf{Near-Optimal Reasoning through the Lens of Ceiling Performance Analysis.} Fig.~\ref{fig:ndcg_ceiling} showcases a step-wise performance analysis of ManCAR.
We report three variants: (i) the prediction from the final reasoning step (ManCAR-last-step), (ii) adaptive halting based on convergence (ManCAR), and (iii) an oracle ceiling that selects the best-performing step per sample using ground-truth labels (ManCAR-ceiling). These results are compared with ContextBERT4Rec, a non-reasoning variant of ManCAR.

When ManCAR is forced to use a fixed (symmetric) number of reasoning steps, performance degrades relative to adaptive halting, though it remains closer to the ceiling than the non-reasoning variant. In contrast, adaptive reasoning consistently outperforms the symmetric setting and achieves performance that is very close to the oracle ceiling, indicating effective reasoning and termination.

In contrast, prior reasoning-based methods such as PLR~\cite{abs-2601-03153} and ReaRec~\cite{abs-2503-22675} (including PRL and ERL) also report ceiling performance, but exhibit a substantially larger gap between their actual inference performance and the ceiling. This highlights ManCAR's ability to translate iterative refinement into near-optimal test-time
behavior, rather than relying on a fixed reasoning budget.

\subsection{Ablation Study (Tab.~\ref{tab:ablation_cds_video}).}

\begin{table}[t]
\centering
\caption{Ablation results on CDs and Video.}
\vspace{-10pt}
\label{tab:ablation_cds_video}
\scalebox{0.95}{
\begin{tabular}{lcccc}
\hline
\multirow{2}{*}{method} & \multicolumn{2}{c}{CDs} & \multicolumn{2}{c}{Video} \\
\cline{2-5}
 & N@10 & R@10 & N@10 & R@10 \\
\hline
ManCAR              & \textbf{0.0282} & \textbf{0.0516} & \textbf{0.0275} & \textbf{0.0521} \\
\hline
w/o Teacher Prior   & 0.0212 & 0.0436 & 0.0231 & 0.0445 \\

w/o Context         & 0.0234 & 0.0469 & 0.0265 & 0.0497 \\

w/o Norm Rescale    & 0.0277 & 0.0513 & 0.0271 & 0.0515 \\
\hline
w/o  Schedule & 0.0261 & 0.0496 & 0.0251 & 0.0491 \\

w/ Decrease Schedule & 0.0273 & 0.0490 & 0.0250 & 0.0518 \\
\hline

\end{tabular}
}
\vspace{-15pt}
\end{table}

\vspace{5pt}
\noindent\textbf{Graph-driven manifold constraint (w/o teacher prior)} results in the largest performance drop among ManCAR variants, though it still outperforms ContextBERT4Rec. This indicates that graph context alone provides limited gains, while the absence of teacher guidance makes target-driven reasoning susceptible to latent drift.

\vspace{5pt}
\noindent
\noindent\textbf{Context engineering (w/o context).}
Removing candidate-set context injection causes a clear performance drop, though this variant still outperforms ReaRec-style baselines. This suggests that teacher guidance alone can partially steer reasoning, while injecting graph-conditioned candidates as auxiliary context further narrows the predictive search space and improves target localization.

\vspace{5pt}
\noindent
\textbf{Latent state norm rescaling (w/o rescaling).}
Removing this module causes a consistent performance drop,
highlighting its role in aligning latent states with item embeddings. This normalization mitigates empirical norm growth and improves numerical stability, supporting stable stepwise refinement in multi-step reasoning.

\vspace{5pt}
\noindent
\noindent\textbf{Loss scheduling (w/o schedule or decreasing schedule).}
Removing the schedule or adopting a decreasing one in target prediction loss leads to clear performance degradation. This agrees with our analysis (Proposition~\ref{prop:continuation_tracking}) that conservative early updates helps to preserve manifold-walking stability and avoids premature convergence associated with decreasing schedules.

\subsection{Parameter Sensitivity}
\label{sec:parameter}

Fig.~\ref{fig:sensitivity} and~\ref{fig:ndcg_wrt_steps} present the parameter sensitivity of ManCAR with respect to \#context items (the number of graph neighbors used to construct the candidate set), \#training-steps (the step configuration used at training time), $\lambda$ (balancing target prediction and KL regularization), $\gamma_{\text{base}}$ (controlling the teacher sharpness schedule), and $\tau_base$ (controlling the temperature schedule for target prediction).

Among the major hyperparameters, ManCAR is most sensitive to the number of graph neighbors (construct the candidate set) and training-time steps, since either noise-injection or insufficient support and shaping of the manifold may degrade the performance.
In contrast, the model is relatively insensitive to the choice of $\lambda$ (balancing target prediction and KL regularization), $\gamma_{\text{base}}$ (controlling the teacher sharpness
schedule), and $\tau_1$ (controlling the temperature schedule for target prediction). Across these parameters, performance exhibits smooth and well-behaved trends, allowing optimal values to be reliably identified via simple grid search.

\begin{figure}[t]
  \centering
  \includegraphics[width=\columnwidth]{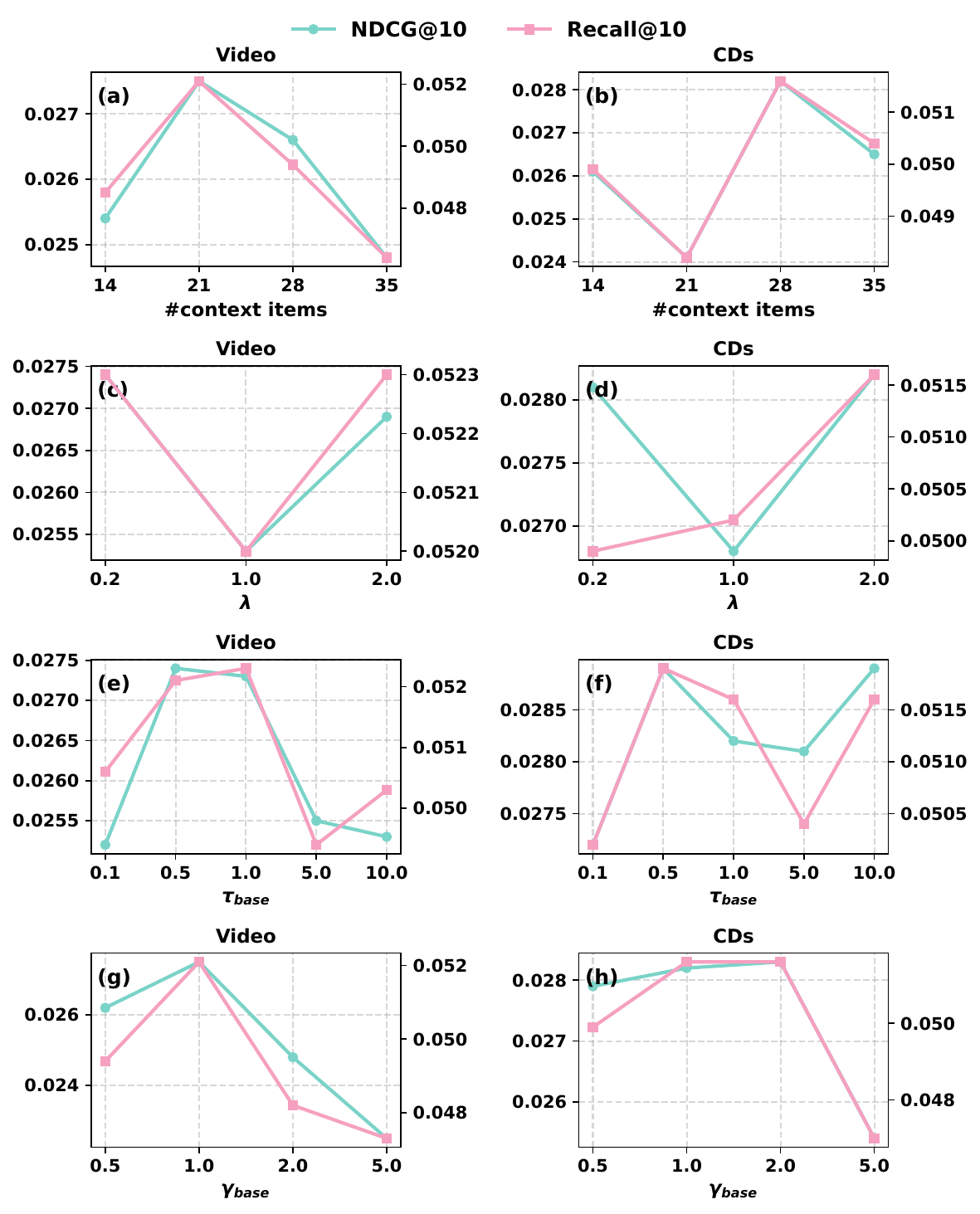}
  \caption{Sensitivity analysis on Video and CDs. (a) and (b): NDCG@10 and Recall@10 w.r.t. \#context items; (c) and (d): NDCG@10 and Recall@10 w.r.t. regularization loss weight $\lambda$; (e) and (f): NDCG@10 and Recall@10 w.r.t. temperature $\tau_{\mathrm{base}}$; (g) and (h): NDCG@10 and Recall@10 w.r.t. $\gamma_{\mathrm{base}}$.}
  \label{fig:sensitivity}
\end{figure} 

\begin{figure}[t]  
  \centering
  \includegraphics[width=\columnwidth]{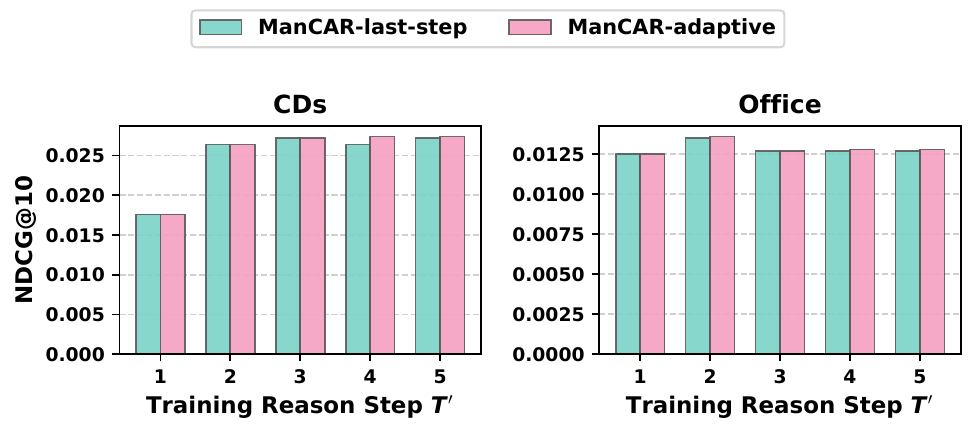}
  % \vspace{-20pt}
  \caption{NDCG@10 w.r.t. reason step $T^\prime$ on CDs and Office.}
  % \vspace{-5pt}
  \label{fig:ndcg_wrt_steps}
\end{figure}

\subsection{KL-Based Halting Analysis}
\label{sec:step_wise_kl}

We report the KL divergence between consecutive reasoning steps on two datasets, CDs and Video. For each test batch, we compute the average KL divergence across samples and then report the mean
and variance across batches. As shown in Fig.~\ref{fig:step_wise_kl}, after sufficient training, the KL divergence between adjacent reasoning steps decreases sharply, indicating stable convergence of the reasoning trajectory as expected.

\begin{figure}[t]
    \centering
    \includegraphics[width=\linewidth]{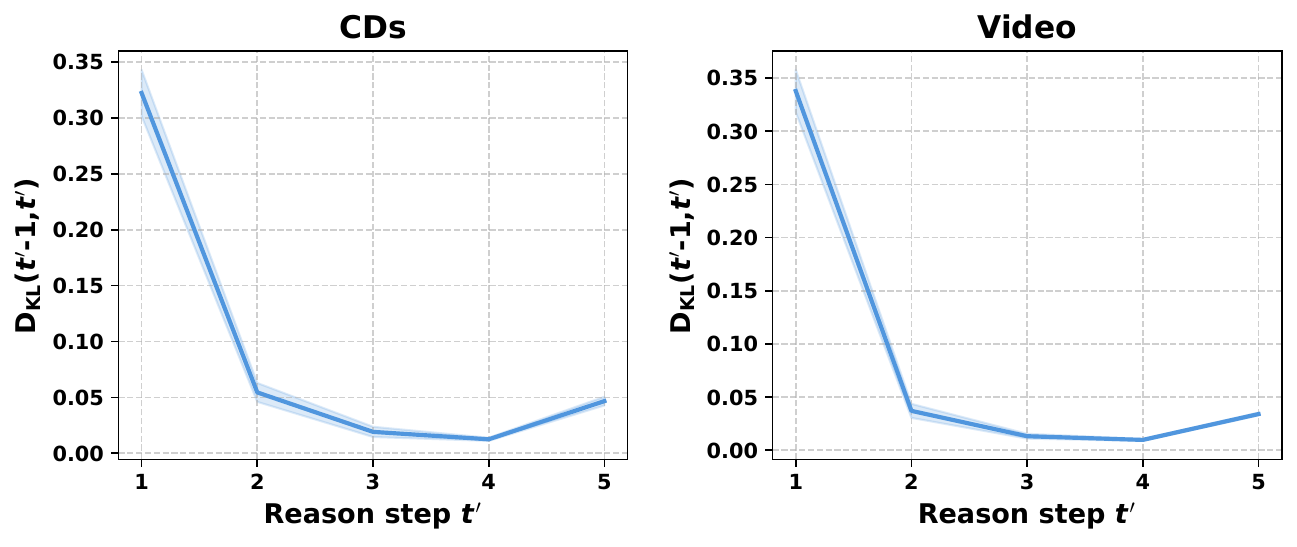}
    \caption{KL divergence between two adjacent step $t^\prime-1$ and $t$ w.r.t. inference steps $t^\prime$.}
    \label{fig:step_wise_kl}
\end{figure}

\subsection{Attention Visualization Analysis}
\label{sec:appx_atten_vis}

\begin{figure}[t]
    \centering
    \includegraphics[width=\linewidth]{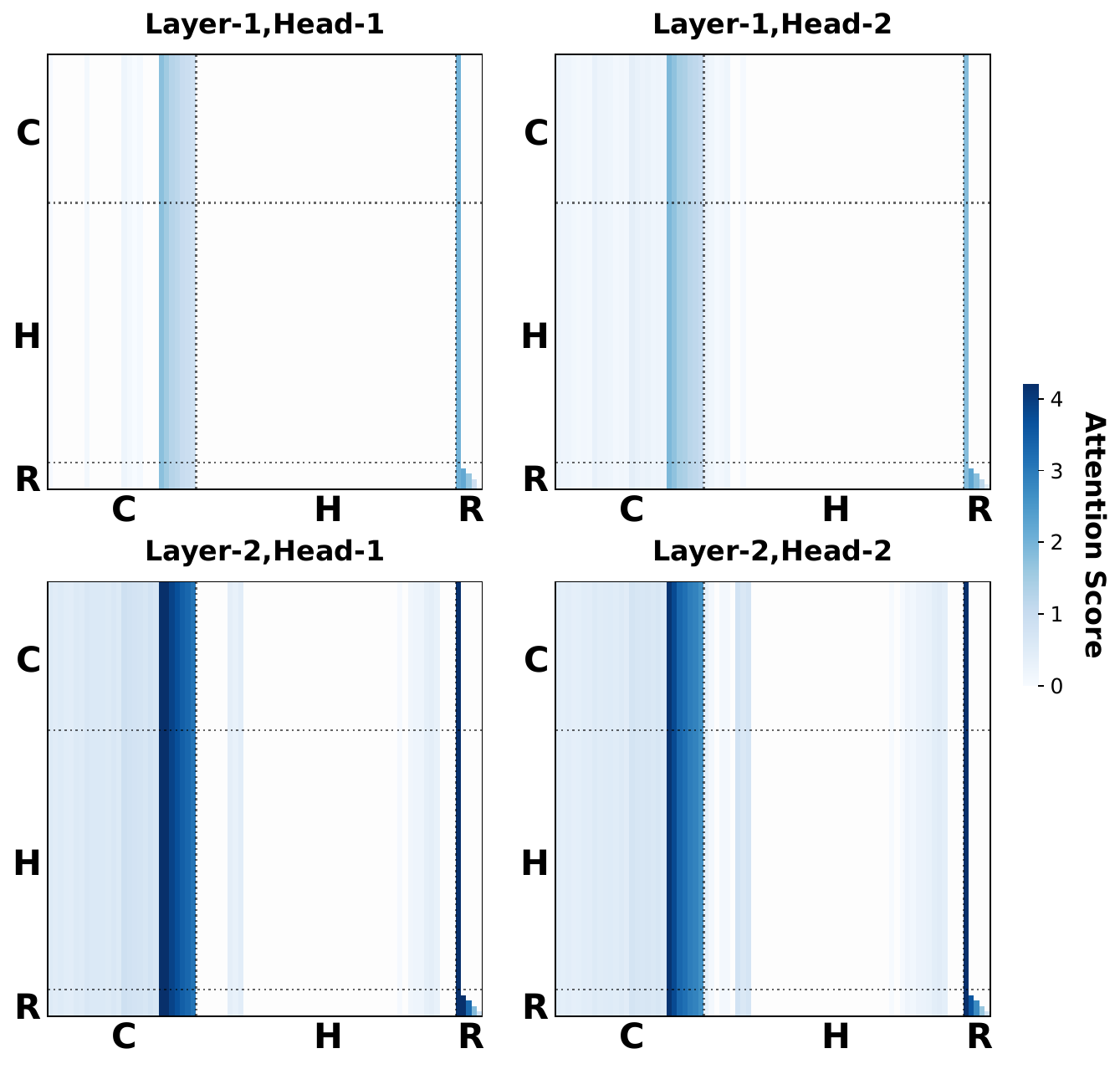}
    \caption{Attention Analysis of ManCAR on CDs. Attention scores are averaged over 1024 randomly sampled user histories from the test set.}
    \label{fig:attention_vis}
\end{figure}

The attention heatmaps in Fig.~\ref{fig:attention_vis} (two layers with two heads, with tokens partitioned into Context C, Interaction History H, and Reasoning Steps R) reveal a consistent routing pattern that aligns with ManCAR's manifold-constrained latent reasoning design. Across all heads, we observe a prominent concentration of attention mass from reasoning tokens toward a small subset of context positions (i.e., strong vertical bands within the $C$ region), while the H region is comparatively diffuse and weaker. This indicates that the intermediate reasoning states do not evolve in a free-form manner; instead, they repeatedly query the injected candidate context $\mathcal{C}(k)$ during refinement. 

Moreover, the deeper layer exhibits sharper and more structured attention: the $R \to C$ concentration becomes stronger, and we also see increased self-referential aggregation near the $R$ boundary (visible as emphasis close to the rightmost columns / bottom-right region). This suggests that later layers increasingly perform inter-step consolidation, integrating previous reasoning states while still grounding each update in the graph-conditioned candidate set. 

Additionally, recent user interactions—particularly the latest action—receive consistently larger attention scores, reflecting the recency bias commonly observed in practical recommender systems. Together, this indicates that ManCAR is building the data channel: Recent Action $\rightarrow$ Graph-Anchors (neighbors) $\rightarrow$ Reasoning States to achieve adaptive, stable, and constrained refinement within the local intent manifold.

\subsection{Additional Analyses}
For a comprehensive breakdown of the reasoning steps used during training and inference across all seven datasets, please refer to Appendix~\ref{sec:allocation}. Additionally, we provide a rigorous theoretical derivation of the computational complexity (FLOPs) for ManCAR and other baseline models in Appendix~\ref{sec:complexity}.
\section{Related Work}

\subsection{Sequential Recommendation}

As a core paradigm in recommendation, sequential recommendation captures user preferences to forecast the next item of interest. 

Non-LLM-Based sequential recommendation evolves from sequential pattern mining~\cite{YapLY12} and Markov chains~\cite{HeFWM16, HeM16} to recent deep learning approaches~\cite{KangM18,sun2019bert4rec}. 
Detailed surveys on non-LLM-Based sequential recommendation are available in~\cite{FangZSG20,WangHWCSO19}.

Recently, the emergence of LLMs has greatly affected the field of sequential recommendation, diverging into two paradigms~\cite{HuXZFWHLTZ24}: 
(1) LLM-Augmented sequential recommendation uses LLMs as feature extractors. LLMSeq~\cite{HarteZLKJF23} and SAID~\cite{HuXZFWHLTZ24} utilize LLM-derived embeddings for initialization and semantic alignment. 
Meanwhile, LRD~\cite{YangMSALCZ24} and SERALM~\cite{RenCYLJCZM024} leverage language knowledge to discover latent relations and refine generation via feedback from ID-based recommenders.
(2) LLM-Centric sequential recommendation employs the LLM as the predictor. 
Methods range from processing item ``sentences'' (RecFormer~\cite{li2023text}) and ID sequences (E4SRec~\cite{li2023e4srec}) to managing long sequences via summarization (LLM-TRSR~\cite{ZhengCQZ024}). 
Other works enhance reasoning through intent-driven prompting (LLM4ISR~\cite{0001LQFWO24}) and self-reflection agents (Re2LLM~\cite{abs-2403-16427}).

Besides, there is a burgeoning sequential recommendation paradigm called generative sequential recommendation~\cite{RajputMSKVHHT0S23,TanXHGLZ24,wang2024learnableitemtokenizationgenerative,ZhaiMWYZLT25} that replaces pre-fixed item IDs with identifiers constructed from generated tokens. 
By synthesizing tokens, these methods better leverage content to encode item semantics directly into the ID structure. 
However, this direction remains under-explored due to optimization challenges, such as the difficulty of distinguishing similar items with identical token sequences~\cite{ZhuJLQD024}.

\subsection{Reasoning-Enhanced Recommendation}

Reasoning-enhanced recommendation augments sequential recommendation with deliberative capabilities. 
It can be categorized into Explicit Reasoning (using visible, text-based chains) and Latent Reasoning (employing implicit, internal computation) to enhance recommendation accuracy. 

\vspace{5pt}
\noindent\textbf{Explicit Reasoning-Enhanced Recommendation.} 
Explicit reasoning approaches leverage the generative capabilities of LLMs to articulate the decision-making process through interpretable text or symbolic chains. 
R2ec~\cite{abs-2505-16994} introduces a unified dual-head architecture that simultaneously generates reasoning chains and predicts items. This design significantly reduces inference latency.
ReasoningRec~\cite{BismayDC25} bridges recommendations and explanations, and it uses CoT prompting to distill a LLM's synthetic reasoning into a smaller model. 
Reason4Rec~\cite{abs-2502-02061} formulates the deliberative recommendation task that incorporates explicit reasoning about user preferences as an alignment goal and enhances model's reasoning capabilities utilizing verbalized user feedback in a step-wise manner. 
Exp3rt~\cite{Kim0CKCY025} distills reasoning capabilities into a student LLM via a three-step process: preference extraction, profile construction, and prediction. It effectively utilizes rich review data for personalized recommendation. 
OneRec-Think~\cite{abs-2510-11639} introduces a ``Think-Ahead'' architecture that seamlessly integrates dialogue, reasoning, and personalized recommendation. 
RecGPT~\cite{abs-2512-14503} employs a Hierarchical Multi-Agent System for agentic intent reasoning and hybrid representation for efficiency, thereby solving the scalability issues of its predecessor, yet the complex multi-agent coordination introduces new challenges in system stability and debugging.

\vspace{5pt}
\noindent\textbf{Latent Reasoning-Enhanced Recommendation.} 
Inspired by latent reasoning for LLMs~\cite{abs-2412-06769}, recent sequential recommendation models have adopted latent reasoning to perform multi-step deliberation before prediction, without requiring explicit CoT data.  
ReaRec~\cite{abs-2503-22675} pioneers inference-time computing by autoregressively feeding the last hidden state back into the encoder to enhance performance. 
OnePiece~\cite{abs-2509-18091} applies latent reasoning to industrial retrieval and ranking by integrating context engineering with block-wise latent reasoning to progressively refine user intent. 
LARES~\cite{abs-2505-16865} employs depth-recurrent latent reasoning that leverages all the input tokens to perform multi-step reasoning. 
PLR~\cite{abs-2601-03153} introduces a width-level scaling paradigm that explores diverse reasoning paths simultaneously via parallel streams to alleviate diminishing returns as reasoning depth increases. 

\section{Conclusion}

We proposed \textbf{ManCAR}, a manifold-constrained latent reasoning framework for sequential recommendation.
By restricting latent refinement to a graph-locality-induced manifold and guiding it with progressive teacher supervision towards the target item, ManCAR enables stable and structured multi-step reasoning. A continuation-based analysis motivates both the teacher scheduling strategy and adaptive test-time termination.
Extensive experiments on seven public datasets demonstrate that ManCAR consistently outperforms strong sequential and reasoning-based baselines, yielding substantial improvements in retrieval and 
ranking quality. These results highlight the importance of explicit constraints over latent reasoning with concrete collaborative signals, and position ManCAR as a principled approach for controllable reasoning in sequential recommendation.

\bibliographystyle{ACM-Reference-Format}
\bibliography{ref}

@inproceedings{DeldjooHMKSRVSK24,
  author       = {Yashar Deldjoo and
                  Zhankui He and
                  Julian J. McAuley and
                  Anton Korikov and
                  Scott Sanner and
                  Arnau Ramisa and
                  Ren{\'{e}} Vidal and
                  Maheswaran Sathiamoorthy and
                  Atoosa Kasirzadeh and
                  Silvia Milano},
  title        = {A Review of Modern Recommender Systems Using Generative Models (Gen-RecSys)},
  booktitle    = {{KDD}},
  pages        = {6448--6458},
  year         = {2024}
}

@inproceedings{WuT0WXT19,
  author       = {Shu Wu and
                  Yuyuan Tang and
                  Yanqiao Zhu and
                  Liang Wang and
                  Xing Xie and
                  Tieniu Tan},
  title        = {Session-Based Recommendation with Graph Neural Networks},
  booktitle    = {{AAAI}},
  pages        = {346--353},
  year         = {2019}
}

@inproceedings{YingHCEHL18,
  author       = {Rex Ying and
                  Ruining He and
                  Kaifeng Chen and
                  Pong Eksombatchai and
                  William L. Hamilton and
                  Jure Leskovec},
  title        = {Graph Convolutional Neural Networks for Web-Scale Recommender Systems},
  booktitle    = {{KDD}},
  pages        = {974--983},
  year         = {2018}
}

@inproceedings{YapLY12,
  author       = {Ghim{-}Eng Yap and
                  Xiaoli Li and
                  Philip S. Yu},
  title        = {Effective Next-Items Recommendation via Personalized Sequential Pattern Mining},
  booktitle    = {{DASFAA}},
  volume       = {7239},
  pages        = {48--64},
  year         = {2012}
}

@inproceedings{HeFWM16,
  author       = {Ruining He and
                  Chen Fang and
                  Zhaowen Wang and
                  Julian J. McAuley},
  title        = {Vista: {A} Visually, Socially, and Temporally-aware Model for Artistic Recommendation},
  booktitle    = {RecSys},
  pages        = {309--316},
  year         = {2016}
}

@inproceedings{HeM16,
  author       = {Ruining He and
                  Julian J. McAuley},
  title        = {Fusing Similarity Models with Markov Chains for Sparse Sequential
                  Recommendation},
  booktitle    = {{ICDM}},
  pages        = {191--200},
  year         = {2016}
}

@inproceedings{KangM18,
  author       = {Wang{-}Cheng Kang and
                  Julian J. McAuley},
  title        = {Self-Attentive Sequential Recommendation},
  booktitle    = {{ICDM}},
  pages        = {197--206},
  year         = {2018}
}

@inproceedings{sun2019bert4rec,
  author       = {Fei Sun and
                  Jun Liu and
                  Jian Wu and
                  Changhua Pei and
                  Xiao Lin and
                  Wenwu Ou and
                  Peng Jiang},
  title        = {BERT4Rec: Sequential Recommendation with Bidirectional Encoder Representations from Transformer},
  booktitle    = {{CIKM}},
  pages        = {1441--1450},
  year         = {2019}
}

@article{FangZSG20,
  author       = {Hui Fang and
                  Danning Zhang and
                  Yiheng Shu and
                  Guibing Guo},
  title        = {Deep Learning for Sequential Recommendation: Algorithms, Influential Factors, and Evaluations},
  journal      = {{ACM} Trans. Inf. Syst.},
  volume       = {39},
  number       = {1},
  pages        = {10:1--10:42},
  year         = {2020}
}

@inproceedings{WangHWCSO19,
  author       = {Shoujin Wang and
                  Liang Hu and
                  Yan Wang and
                  Longbing Cao and
                  Quan Z. Sheng and
                  Mehmet A. Orgun},
  title        = {Sequential Recommender Systems: Challenges, Progress and Prospects},
  booktitle    = {{IJCAI}},
  pages        = {6332--6338},
  year         = {2019}
}

@inproceedings{HuXZFWHLTZ24,
  author       = {Jun Hu and
                  Wenwen Xia and
                  Xiaolu Zhang and
                  Chilin Fu and
                  Weichang Wu and
                  Zhaoxin Huan and
                  Ang Li and
                  Zuoli Tang and
                  Jun Zhou},
  title        = {Enhancing Sequential Recommendation via LLM-based Semantic Embedding Learning},
  booktitle    = {{WWW}},
  pages        = {103--111},
  year         = {2024}
}

@inproceedings{HarteZLKJF23,
  author       = {Jesse Harte and
                  Wouter Zorgdrager and
                  Panos Louridas and
                  Asterios Katsifodimos and
                  Dietmar Jannach and
                  Marios Fragkoulis},
  title        = {Leveraging Large Language Models for Sequential Recommendation},
  booktitle    = {RecSys},
  pages        = {1096--1102},
  year         = {2023}
}

@inproceedings{YangMSALCZ24,
  author       = {Shenghao Yang and
                  Weizhi Ma and
                  Peijie Sun and
                  Qingyao Ai and
                  Yiqun Liu and
                  Mingchen Cai and
                  Min Zhang},
  title        = {Sequential Recommendation with Latent Relations based on Large Language Model},
  booktitle    = {{SIGIR}},
  pages        = {335--344},
  year         = {2024}
}

@inproceedings{RenCYLJCZM024,
  author       = {Yankun Ren and
                  Zhongde Chen and
                  Xinxing Yang and
                  Longfei Li and
                  Cong Jiang and
                  Lei Cheng and
                  Bo Zhang and
                  Linjian Mo and
                  Jun Zhou},
  title        = {Enhancing Sequential Recommenders with Augmented Knowledge from Aligned Large Language Models},
  booktitle    = {{SIGIR}},
  pages        = {345--354},
  year         = {2024}
}

@inproceedings{li2023text,
  author       = {Jiacheng Li and
                  Ming Wang and
                  Jin Li and
                  Jinmiao Fu and
                  Xin Shen and
                  Jingbo Shang and
                  Julian J. McAuley},
  title        = {Text Is All You Need: Learning Language Representations for Sequential Recommendation},
  booktitle    = {{KDD}},
  pages        = {1258--1267},
  year         = {2023}
}

@article{li2023e4srec,
  author       = {Xinhang Li and
                  Chong Chen and
                  Xiangyu Zhao and
                  Yong Zhang and
                  Chunxiao Xing},
  title        = {E4SRec: An Elegant Effective Efficient Extensible Solution of Large Language Models for Sequential Recommendation},
  journal      = {arXiv Preprint},
  URL          = {https://arxiv.org/abs/2312.02443}, 
  year         = {2023}
}

@inproceedings{ZhengCQZ024,
  author       = {Zhi Zheng and
                  Wenshuo Chao and
                  Zhaopeng Qiu and
                  Hengshu Zhu and
                  Hui Xiong},
  title        = {Harnessing Large Language Models for Text-Rich Sequential Recommendation},
  booktitle    = {{WWW}},
  pages        = {3207--3216},
  year         = {2024}
}

@inproceedings{0001LQFWO24,
  author       = {Zhu Sun and
                  Hongyang Liu and
                  Xinghua Qu and
                  Kaidong Feng and
                  Yan Wang and
                  Yew Soon Ong},
  title        = {Large Language Models for Intent-Driven Session Recommendations},
  booktitle    = {{SIGIR}},
  pages        = {324--334},
  year         = {2024}
}

@article{abs-2403-16427,
  author       = {Ziyan Wang and
                  Yingpeng Du and
                  Zhu Sun and
                  Haoyan Chua and
                  Kaidong Feng and
                  Wenya Wang and
                  Jie Zhang},
  title        = {Re2LLM: Reflective Reinforcement Large Language Model for Session-based Recommendation},
  booktitle    = {{AAAI}},
  pages        = {12827--12835},
  year         = {2025}
}

@inproceedings{RajputMSKVHHT0S23,
  author       = {Shashank Rajput and
                  Nikhil Mehta and
                  Anima Singh and
                  Raghunandan Hulikal Keshavan and
                  Trung Vu and
                  Lukasz Heldt and
                  Lichan Hong and
                  Yi Tay and
                  Vinh Q. Tran and
                  Jonah Samost and
                  Maciej Kula and
                  Ed H. Chi and
                  Mahesh Sathiamoorthy},
  title        = {Recommender Systems with Generative Retrieval},
  booktitle    = {NeurIPS},
  pages        = {10299--10315},
  year         = {2023}
}

@inproceedings{TanXHGLZ24,
  author       = {Juntao Tan and
                  Shuyuan Xu and
                  Wenyue Hua and
                  Yingqiang Ge and
                  Zelong Li and
                  Yongfeng Zhang},
  title        = {IDGenRec: LLM-RecSys Alignment with Textual {ID} Learning},
  booktitle    = {{SIGIR}},
  pages        = {355--364},
  year         = {2024}
}

@inproceedings{wang2024learnableitemtokenizationgenerative,
  author       = {Wenjie Wang and
                  Honghui Bao and
                  Xinyu Lin and
                  Jizhi Zhang and
                  Yongqi Li and
                  Fuli Feng and
                  See{-}Kiong Ng and
                  Tat{-}Seng Chua},
  title        = {Learnable Item Tokenization for Generative Recommendation},
  booktitle    = {{CIKM}},
  pages        = {2400--2409},
  year         = {2024}
}

@inproceedings{ZhaiMWYZLT25,
  author       = {Jianyang Zhai and
                  Zi-Feng Mai and
                  Chang-Dong Wang and
                  Feidiao Yang and
                  Xiawu Zheng and
                  Hui Li and
                  Yonghong Tian},
  title        = {Multimodal Quantitative Language for Generative Recommendation},
  booktitle    = {{ICLR}},
  URL          = {https://openreview.net/forum?id=v7YrIjpkTF}, 
  year         = {2025}
}

@inproceedings{ZhuJLQD024,
  author       = {Jieming Zhu and
                  Mengqun Jin and
                  Qijiong Liu and
                  Zexuan Qiu and
                  Zhenhua Dong and
                  Xiu Li},
  title        = {CoST: Contrastive Quantization based Semantic Tokenization for Generative Recommendation},
  booktitle    = {RecSys},
  pages        = {969--974},
  year         = {2024}
}

@article{abs-2505-16994,
  author       = {Runyang You and
                  Yongqi Li and
                  Xinyu Lin and
                  Xin Zhang and
                  Wenjie Wang and
                  Wenjie Li and
                  Liqiang Nie},
  title        = {R\({}^{\mbox{2}}\)ec: Towards Large Recommender Models with Reasoning},
  journal      = {arXiv Preprint},
  url          = {https://arxiv.org/abs/2505.16994},
  year         = {2025}
}

@inproceedings{BismayDC25,
  author       = {Millennium Bismay and
                  Xiangjue Dong and
                  James Caverlee},
  title        = {ReasoningRec: Bridging Personalized Recommendations and Human-Interpretable Explanations through {LLM} Reasoning},
  booktitle    = {{NAACL} (Findings)},
  pages        = {8132--8148},
  year         = {2025}
}

@article{abs-2502-02061,
  author       = {Yi Fang and
                  Wenjie Wang and
                  Yang Zhang and
                  Fengbin Zhu and
                  Qifan Wang and
                  Fuli Feng and
                  Xiangnan He},
  title        = {Reason4Rec: Large Language Models for Recommendation with Deliberative User Preference Alignment},
  journal      = {arXiv Preprint},
  url          = {https://arxiv.org/abs/2502.02061},
  year         = {2025}
}

@inproceedings{Kim0CKCY025,
  author       = {Jieyong Kim and
                  Hyunseo Kim and
                  Hyunjin Cho and
                  SeongKu Kang and
                  Buru Chang and
                  Jinyoung Yeo and
                  Dongha Lee},
  title        = {Review-driven Personalized Preference Reasoning with Large Language Models for Recommendation},
  booktitle    = {{SIGIR}},
  pages        = {1697--1706},
  year         = {2025}
}

@article{abs-2510-11639,
  author       = {Zhanyu Liu and
                  Shiyao Wang and
                  Xingmei Wang and
                  Rongzhou Zhang and
                  Jiaxin Deng and
                  Honghui Bao and
                  Jinghao Zhang and
                  Wuchao Li and
                  Pengfei Zheng and
                  Xiangyu Wu and
                  Yifei Hu and
                  Qigen Hu and
                  Xinchen Luo and
                  Lejian Ren and
                  Zixing Zhang and
                  Qianqian Wang and
                  Kuo Cai and
                  Yunfan Wu and
                  Hongtao Cheng and
                  Zexuan Cheng and
                  Lu Ren and
                  Huanjie Wang and
                  Yi Su and
                  Ruiming Tang and
                  Kun Gai and
                  Guorui Zhou},
  title        = {OneRec-Think: In-Text Reasoning for Generative Recommendation},
  journal      = {arXiv Preprint},
  url          = {https://arxiv.org/abs/2510.11639},
  year         = {2025}
}

@article{abs-2512-14503,
  author       = {Chao Yi and
                  Dian Chen and
                  Gaoyang Guo and
                  Jiakai Tang and
                  Jian Wu and
                  Jing Yu and
                  Mao Zhang and
                  Wen Chen and
                  Wenjun Yang and
                  Yujie Luo and
                  Yuning Jiang and
                  Zhujin Gao and
                  Bo Zheng and
                  Binbin Cao and
                  Changfa Wu and
                  Dixuan Wang and
                  Han Wu and
                  Haoyi Hu and
                  Kewei Zhu and
                  Lang Tian and
                  Lin Yang and
                  Qiqi Huang and
                  Siqi Yang and
                  Wenbo Su and
                  Xiaoxiao He and
                  Xin Tong and
                  Xu Chen and
                  Xunke Xi and
                  Xiaowei Huang and
                  Yaxuan Wu and
                  Yeqiu Yang and
                  Yi Hu and
                  Yujin Yuan and
                  Yuliang Yan and
                  Zile Zhou},
  title        = {RecGPT-V2 Technical Report},
  journal      = {arXiv Preprint},
  url          = {https://arxiv.org/abs/2512.14503},
  year         = {2025}
}

@article{abs-2412-06769,
  author       = {Shibo Hao and
                  Sainbayar Sukhbaatar and
                  DiJia Su and
                  Xian Li and
                  Zhiting Hu and
                  Jason Weston and
                  Yuandong Tian},
  title        = {Training Large Language Models to Reason in a Continuous Latent Space},
  journal      = {arXiv Preprint},
  url          = {https://arxiv.org/abs/2412.06769},
  year         = {2024}
}

@article{abs-2503-22675,
  author       = {Jiakai Tang and
                  Sunhao Dai and
                  Teng Shi and
                  Jun Xu and
                  Xu Chen and
                  Wen Chen and
                  Wu Jian and
                  Yuning Jiang},
  title        = {Think Before Recommend: Unleashing the Latent Reasoning Power for Sequential Recommendation},
  journal      = {arXiv Preprint},
  url          = {https://arxiv.org/abs/2503.22675},
  year         = {2025}
}

@article{abs-2505-16865,
  author       = {Enze Liu and
                  Bowen Zheng and
                  Xiaolei Wang and
                  Wayne Xin Zhao and
                  Jinpeng Wang and
                  Sheng Chen and
                  Ji{-}Rong Wen},
  title        = {{LARES:} Latent Reasoning for Sequential Recommendation},
  journal      = {arXiv Preprint},
  url          = {https://arxiv.org/abs/2505.16865},
  year         = {2025}
}

@article{abs-2601-03153,
  author       = {Jiakai Tang and 
                  Xu Chen and 
                  Wen Chen and 
                  Jian Wu and 
                  Yuning Jiang and 
                  Bo Zheng},
  title        = {Parallel Latent Reasoning for Sequential Recommendation},
  journal      = {arXiv Preprint},
  url          = {https://arxiv.org/abs/2601.03153},
  year         = {2026}
}

@article{abs-2509-18091,
  author       = {Sunhao Dai and
                  Jiakai Tang and
                  Jiahua Wu and
                  Kun Wang and
                  Yuxuan Zhu and
                  Bingjun Chen and
                  Bangyang Hong and
                  Yu Zhao and
                  Cong Fu and
                  Kangle Wu and
                  Yabo Ni and
                  Anxiang Zeng and
                  Wenjie Wang and
                  Xu Chen and
                  Jun Xu and
                  See{-}Kiong Ng},
  title        = {OnePiece: Bringing Context Engineering and Reasoning to Industrial Cascade Ranking System},
  journal      = {arXiv Preprint},
  url          = {https://arxiv.org/abs/2509.18091},
  year         = {2025}
}

@article{abs-2507-13334,
  author       = {Lingrui Mei and
                  Jiayu Yao and
                  Yuyao Ge and
                  Yiwei Wang and
                  Baolong Bi and
                  Yujun Cai and
                  Jiazhi Liu and
                  Mingyu Li and
                  Zhong{-}Zhi Li and
                  Duzhen Zhang and
                  Chenlin Zhou and
                  Jiayi Mao and
                  Tianze Xia and
                  Jiafeng Guo and
                  Shenghua Liu},
  journal      = {arXiv Preprint},
  url          = {https://arxiv.org/abs/2507.13334},
  year         = {2025}
}

@article{abs-2010-05525,
  author       = {Xiaoyong Yang and
                  Yadong Zhu and
                  Yi Zhang and
                  Xiaobo Wang and
                  Quan Yuan},
  title        = {Large Scale Product Graph Construction for Recommendation in E-commerce},
  journal      = {arXiv Preprint},
  url          = {https://arxiv.org/abs/2010.05525},
  year         = {2020}
}

@misc{layernormscaling,
      author={Wenfang Sun and Xinyuan Song and Pengxiang Li and Lu Yin and Yefeng Zheng and Shiwei Liu},
      title={The Curse of Depth in Large Language Models}, 
      journal  = {arXiv Preprint},
      url={https://arxiv.org/abs/2502.05795}, 
      year={2025}
}

@inproceedings{ShenYZHDH25,
  author       = {Zhenyi Shen and
                  Hanqi Yan and
                  Linhai Zhang and
                  Zhanghao Hu and
                  Yali Du and
                  Yulan He},
  title        = {{CODI:} Compressing Chain-of-Thought into Continuous Space via Self-Distillation},
  booktitle    = {{EMNLP}},
  pages        = {677--693},
  year         = {2025}
}

@article{abs-2509-02350,
  author       = {Jindong Li and
                  Yali Fu and
                  Li Fan and
                  Jiahong Liu and
                  Yao Shu and
                  Chengwei Qin and
                  Menglin Yang and
                  Irwin King and
                  Rex Ying},
  title        = {Implicit Reasoning in Large Language Models: {A} Comprehensive Survey},
  journal      = {arXiv Preprint},
  url={https://arxiv.org/abs/2509.02350},
  year         = {2025}
}

@article{abs-2510-15522,
  author       = {Jingcheng Deng and
                  Liang Pang and
                  Zihao Wei and
                  Shicheng Xu and
                  Zenghao Duan and
                  Kun Xu and
                  Yang Song and
                  Huawei Shen and
                  Xueqi Cheng},
  title        = {Latent Reasoning in LLMs as a Vocabulary-Space Superposition},
  journal      = {arXiv Preprint},
  url          = {https://arxiv.org/abs/2510.15522},
  year         = {2025}
}

@inproceedings{YuY00CN22,
  author       = {Junliang Yu and
                  Hongzhi Yin and
                  Xin Xia and
                  Tong Chen and
                  Lizhen Cui and
                  Quoc Viet Hung Nguyen},
  title        = {Are Graph Augmentations Necessary?: Simple Graph Contrastive Learning
                  for Recommendation},
  booktitle    = {{SIGIR}},
  pages        = {1294--1303},
  year         = {2022}
}

@article{JuFGLLQQSSXYYZWLZ24,
  author       = {Wei Ju and
                  Zheng Fang and
                  Yiyang Gu and
                  Zequn Liu and
                  Qingqing Long and
                  Ziyue Qiao and
                  Yifang Qin and
                  Jianhao Shen and
                  Fang Sun and
                  Zhiping Xiao and
                  Junwei Yang and
                  Jingyang Yuan and
                  Yusheng Zhao and
                  Yifan Wang and
                  Xiao Luo and
                  Ming Zhang},
  title        = {A Comprehensive Survey on Deep Graph Representation Learning},
  journal      = {Neural Networks},
  volume       = {173},
  pages        = {106207},
  year         = {2024}
}

@inproceedings{YangWWHHW24,
  author       = {Yonghui Yang and
                  Le Wu and
                  Zihan Wang and
                  Zhuangzhuang He and
                  Richang Hong and
                  Meng Wang},
  title        = {Graph Bottlenecked Social Recommendation},
  booktitle    = {{KDD}},
  pages        = {3853--3862},
  year         = {2024}
}

@inproceedings{YangHXL22,
  author       = {Yuhao Yang and
                  Chao Huang and
                  Lianghao Xia and
                  Chenliang Li},
  title        = {Knowledge Graph Contrastive Learning for Recommendation},
  booktitle    = {{SIGIR}},
  pages        = {1434--1443},
  year         = {2022}
}

@inproceedings{ChangGZHNSJ021,
  author       = {Jianxin Chang and
                  Chen Gao and
                  Yu Zheng and
                  Yiqun Hui and
                  Yanan Niu and
                  Yang Song and
                  Depeng Jin and
                  Yong Li},
  title        = {Sequential Recommendation with Graph Neural Networks},
  booktitle    = {{SIGIR}},
  pages        = {378--387},
  year         = {2021}
}

@inproceedings{WeiRTWSCWYH24,
  author       = {Wei Wei and
                  Xubin Ren and
                  Jiabin Tang and
                  Qinyong Wang and
                  Lixin Su and
                  Suqi Cheng and
                  Junfeng Wang and
                  Dawei Yin and
                  Chao Huang},
  title        = {LLMRec: Large Language Models with Graph Augmentation for Recommendation},
  booktitle    = {{WSDM}},
  pages        = {806--815},
  year         = {2024}
}

@article{abs-2403-03952,
  author       = {Yupeng Hou and
                  Jiacheng Li and
                  Zhankui He and
                  An Yan and
                  Xiusi Chen and
                  Julian J. McAuley},
  title        = {Bridging Language and Items for Retrieval and Recommendation},
  journal      = {arXiv Preprint},
  url          = {https://arxiv.org/abs/2403.03952},
  year         = {2024}
}
\clearpage
\appendix
\section{Derivation of Eq.~\ref{equation:elbo-obj}}
\label{sec:ELBO-derive}

\begin{proposition}[Graph-Conditioned Variational Regularization Training Objective]
\label{prop:graph_elbo}
Let $H$ denote a user interaction history and $i^*$ the ground-truth next item observed at training time.  
Let $\mathcal{C}(k)$ be the candidate set induced by the $k$-hop neighborhood of the most recent items $I_n$ on the interaction graph $\mathcal{G}$.  

Consider the latent-variable formulation
\[
p_\theta(i^* \mid H)
=
\sum_{c \in \mathcal{C}(k)}
p_\theta(c \mid H)\, p_\theta(i^* \mid c, H),
\]
where $c$ is a discrete latent intent prototype.  
Let $q(c \mid I_n, \mathcal{G})$ be any categorical distribution supported on $\mathcal{C}(k)$ that does not depend on $\theta$.  
Then, for all $\theta$, the following inequality holds:
\begin{align}
\log p_\theta(i^* \mid H)
\;\ge\;
&\mathbb{E}_{q(c \mid I_n, \mathcal{G})}
\big[ \log p_\theta(i^* \mid c, H) \big] \nonumber\\
&-
D_{\mathrm{KL}}\!\left(
q(c \mid I_n, \mathcal{G})
\;\|\;
p_\theta(c \mid H)
\right). \nonumber
\end{align}
The right-hand side defines an ELBO-like objective that regularizes the inferred intent distribution toward the graph-conditioned prior.
\end{proposition}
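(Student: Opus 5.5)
The plan is the standard importance-weighting plus Jensen argument used to derive an ELBO, with care about supports. Starting from the latent-variable formulation in the statement, I restrict the sum over $\mathcal{C}(k)$ to the support $S=\{c: q(c\mid I_n,\mathcal{G})>0\}$. Because every term $p_\theta(c\mid H)\,p_\theta(i^*\mid c,H)$ is nonnegative, dropping the terms outside $S$ can only decrease the sum:
\[
p_\theta(i^*\mid H)\;\ge\;\sum_{c\in S} q(c\mid I_n,\mathcal{G})\,\frac{p_\theta(c\mid H)\,p_\theta(i^*\mid c,H)}{q(c\mid I_n,\mathcal{G})}.
\]
Since $\log$ is monotone, it then suffices to lower bound the logarithm of this $q$-expectation.

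Next I apply Jensen's inequality to the concave function $\log$, with respect to the probability distribution $q$ on $S$. This gives
\[
\log p_\theta(i^*\mid H)\;\ge\;\mathbb{E}_{q}\!\left[\log p_\theta(i^*\mid c,H)\right]+\mathbb{E}_{q}\!\left[\log\frac{p_\theta(c\mid H)}{q(c\mid I_n,\mathcal{G})}\right].
\]
The second expectation is exactly $-D_{\mathrm{KL}}\!\left(q(c\mid I_n,\mathcal{G})\,\|\,p_\theta(c\mid H)\right)$, which is the claimed inequality. The argument never uses how $q$ is constructed; it only uses that $q$ is a probability distribution on $\mathcal{C}(k)$ that does not depend on $\theta$. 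The bound therefore holds for every $\theta$ and every such $q$, including each scheduled teacher $q_{t'}$.

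The main obstacle is the degenerate cases where the logarithms are not finite.
\begin{itemize}
\item If $p_\theta(c\mid H)=0$ for some $c\in S$, then the KL term is $+\infty$ and the right-hand side equals $-\infty$, so the inequality holds trivially.
\item Likewise, if $p_\theta(i^*\mid c,H)=0$ for some $c\in S$, the expected log-likelihood is $-\infty$ (we adopt the convention $\log 0=-\infty$), and again the bound holds trivially.
\end{itemize}
With these conventions, in every remaining case all ratios above are positive and finite on $S$. Jensen's inequality then applies to a finite convex combination of positive reals, so no measure-theoretic issues arise.

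As an optional remark, I will identify the gap in the bound. Under the posterior $p_\theta(c\mid i^*,H)$ obtained from Bayes' rule on the latent-variable model, the exact identity is
\[
\log p_\theta(i^*\mid H)=\text{RHS}+D_{\mathrm{KL}}\!\left(q\,\|\,p_\theta(c\mid i^*,H)\right).
\]
Nonnegativity of KL then gives a second, alternative proof of the inequality. It also shows that equality holds exactly when $q$ matches that posterior.
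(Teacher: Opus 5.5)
Your proposal is correct and follows the paper's proof: rewrite the marginal as a $q$-weighted expectation of the importance ratio $p_\theta(c\mid H)\,p_\theta(i^*\mid c,H)/q(c\mid I_n,\mathcal{G})$, apply Jensen to $\log$, and split off the KL term. Restricting to the support of $q$ (which turns the paper's equality into an inequality) and handling zero probabilities are small rigor refinements that the paper leaves implicit. The posterior-gap identity you note is a valid alternative route.
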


\begin{proof}
Starting from the marginal likelihood,
\[
p_\theta(i^* \mid H)
=
\sum_{c \in \mathcal{C}(k)}
p_\theta(c \mid H)\, p_\theta(i^* \mid c, H).
\]
For any categorical distribution $q(c \mid I_n, \mathcal{G})$ supported on $\mathcal{C}(k)$, we rewrite the sum as an expectation:
\begin{align}
p_\theta(i^* \mid H)
=
&\sum_{c \in \mathcal{C}(k)}
q(c \mid I_n, \mathcal{G})
\frac{p_\theta(c \mid H)\, p_\theta(i^* \mid c, H)}
     {q(c \mid I_n, \mathcal{G})} \nonumber\\
=
&\mathbb{E}_{q}\!\left[
\frac{p_\theta(c \mid H)\, p_\theta(i^* \mid c, H)}
     {q(c \mid I_n, \mathcal{G})}
\right]. \nonumber
\end{align}

Taking logarithm and applying Jensen's inequality yields
\begin{align}
\log p_\theta(i^* \mid H)
\;\ge\;
\mathbb{E}_{q}\!\Big[
&\log p_\theta(c \mid H)
+
\log p_\theta(i^* \mid c, H) \nonumber\\
&-
\log q(c \mid I_n, \mathcal{G})
\Big]. \nonumber
\end{align}

Rearranging terms gives
\begin{align}
\log p_\theta(i^* \mid H)
\;\ge\;
&\mathbb{E}_{q}\!\left[
\log p_\theta(i^* \mid c, H)
\right] \nonumber\\
&-
D_{\mathrm{KL}}\!\left(
q(c \mid I_n, \mathcal{G})
\;\|\;
p_\theta(c \mid H)
\right), \nonumber
\end{align} 
which completes the proof.
\end{proof}

\section{Proof of Proposition~\ref{prop:local_smoothness}}
\label{sec:proof_local_smoothness}

\begin{proof}
For a fixed candidate set $\mathcal{C}$, define
\begin{align}
P(c \mid H)
&= \frac{\exp\!\left(\mathbf{r}^\top \mathbf{e}_c\right)}
{\sum_{c' \in \mathcal{C}} \exp\!\left(\mathbf{r}^\top \mathbf{e}_{c'}\right)} \nonumber\\
&= \frac{\exp\!\left(\mathbf{r}^\top \mathbf{e}_c\right)}{Z(\mathbf{r})},\nonumber\\
Z(\mathbf{r}) &:= \sum_{c' \in \mathcal{C}} \exp\!\left(\mathbf{r}^\top \mathbf{e}_{c'}\right),\nonumber
\end{align}
where we suppress the dependence on $H$ in $\mathbf{r}$ for notational simplicity.

The KL distillation loss is
\begin{align}
\mathcal{L}(\mathbf{r})
&= \mathrm{D}_{KL}\!\left(Q \,\|\, P(\cdot \mid H)\right) \nonumber\\
&= \sum_{c \in \mathcal{C}} Q(c)\log\frac{Q(c)}{P(c\mid H)} \nonumber\\
&= \sum_{c \in \mathcal{C}} Q(c)\log Q(c)
\;-\;
\sum_{c \in \mathcal{C}} Q(c)\log P(c\mid H).\nonumber
\end{align}

The first term is constant with respect to $r$, hence
\begin{equation}
\nabla_{\mathbf{r}} \mathcal{L}(\mathbf{r})
= - \sum_{c \in \mathcal{C}} Q(c)\, \nabla_{\mathbf{r}} \log P(c\mid H).\nonumber
\end{equation}

Next, using $\log P(c\mid H)= \mathbf{r}^\top \mathbf{e}_c - \log Z(\mathbf{r})$, we have
\begin{equation}
\nabla_{\mathbf{r}} \log P(c\mid H)
= \nabla_{\mathbf{r}} (\mathbf{r}^\top \mathbf{e}_c) - \nabla_{\mathbf{r}} \log Z(\mathbf{r})
= \mathbf{e}_c - \frac{1}{Z(\mathbf{r})} \nabla_{\mathbf{r}} Z(\mathbf{r}).\nonumber
\end{equation}
Moreover,
\begin{equation}
\nabla_{\mathbf{r}} Z(\mathbf{r})
= \nabla_{\mathbf{r}} \sum_{c' \in \mathcal{C}} \exp\!\left(\mathbf{r}^\top \mathbf{e}_{c'}\right)
= \sum_{c' \in \mathcal{C}} \exp\!\left(\mathbf{r}^\top \mathbf{e}_{c'}\right) \mathbf{e}_{c'}.\nonumber
\end{equation}
Substituting back yields
\begin{align}
\nabla_{\mathbf{r}} \log P(c\mid H)
&= \mathbf{e}_c - \frac{1}{Z(\mathbf{r})} \sum_{c' \in \mathcal{C}} \exp\!\left(\mathbf{r}^\top \mathbf{e}_{c'}\right) \mathbf{e}_{c'} \nonumber\\
&= \mathbf{e}_c - \sum_{c' \in \mathcal{C}} \frac{\exp\!\left(\mathbf{r}^\top \mathbf{e}_{c'}\right)}{Z(\mathbf{r})} \mathbf{e}_{c'} \nonumber\\
&= \mathbf{e}_c - \sum_{c' \in \mathcal{C}} P(c' \mid H)\, \mathbf{e}_{c'} \nonumber\\
&= \mathbf{e}_c - \mathbb{E}_{P(\cdot \mid H)}[\mathbf{e}_c]. \nonumber
\end{align}

Therefore,
\begin{align}
\nabla_{\mathbf{r}} \mathcal{L}(\mathbf{r})
&= - \sum_{c \in \mathcal{C}} Q(c)\left(\mathbf{e}_c - \mathbb{E}_{P(\cdot \mid H)}[\mathbf{e}_c]\right) \nonumber\\
&= - \sum_{c \in \mathcal{C}} Q(c)\mathbf{e}_c \;+\; \left(\sum_{c \in \mathcal{C}} Q(c)\right)\mathbb{E}_{P(\cdot \mid H)}[\mathbf{e}_c].\nonumber
\end{align}
Since $Q$ is a probability distribution on $\mathcal{C}$, $\sum_{c \in \mathcal{C}} Q(c)=1$, and thus
\begin{equation}
\nabla_{\mathbf{r}} \mathcal{L}(\mathbf{r})
= \mathbb{E}_{P(\cdot \mid H)}[\mathbf{e}_c] - \mathbb{E}_{Q}[\mathbf{e}_c],\nonumber
\end{equation}
which completes the proof.
\end{proof}

\section{Proof of Proposition~\ref{prop:continuation_tracking}}
\label{sec:proof_continuation_tracking}

\begin{proof}
By the triangle inequality of the total variation distance, we have
\begin{equation}
d_{\mathrm{TV}}(p_{t'+1}, q_{t'+1})
\le
d_{\mathrm{TV}}(p_{t'+1}, q_{t'})
+
d_{\mathrm{TV}}(q_{t'}, q_{t'+1}).\nonumber
\label{eq:tv_triangle}
\end{equation}
Applying the bounded teacher drift assumption in Eq.~\ref{eq:assump_drift} yields
\begin{equation}
d_{\mathrm{TV}}(p_{t'+1}, q_{t'+1})
\le
d_{\mathrm{TV}}(p_{t'+1}, q_{t'}) + \delta.\nonumber
\label{eq:tv_drift}
\end{equation}
Using the stepwise contraction assumption in Eq.~\ref{eq:assump_contraction}, we obtain the recursive bound:
\begin{equation}
d_{\mathrm{TV}}(p_{t'+1}, q_{t'+1})
\le
(1-\lambda)\, d_{\mathrm{TV}}(p_{t'}, q_{t'}) + \delta.\nonumber
\label{eq:tv_recursion}
\end{equation}
Unrolling the recursion for $t'-1$ steps gives
\begin{align}
d_{\mathrm{TV}}(p_{t'}, q_{t'})
&\le
(1-\lambda)^{t'-1}\, d_{\mathrm{TV}}(p_1, q_1)
+
\delta \sum_{j=0}^{t'-2} (1-\lambda)^j
\nonumber\\
&=
(1-\lambda)^{t'-1}\, d_{\mathrm{TV}}(p_1, q_1)
+
\delta \cdot \frac{1-(1-\lambda)^{t'-1}}{\lambda}
\nonumber\\
&\le
(1-\lambda)^{t'-1}\, d_{\mathrm{TV}}(p_1, q_1)
+
\frac{\delta}{\lambda},\nonumber
\end{align}
which establishes the desired bound in Eq.~\ref{eq:tracking_bound}.
\end{proof}

\section{Global Relation Modeling via Swing Graph}
\label{sec:swing_build}
To capture stable collaborative signals, we construct a global item graph $\mathcal{G} = (\mathcal{V}, \mathcal{E})$ using an enhanced variant of the Swing algorithm used in industrial practice (e.g., Alibaba~\cite{abs-2010-05525}). This variant incorporates user activity normalization and popularity smoothing to mitigate the impact of noise from hyper-active users and hot items.

Formally, let $\mathcal{U}_i$ denote the set of users who interacted with item $i$, and $I_u$ denote the interaction history of user $u$. For a pair of items $(i, j)$, we first identify the set of common users $\mathcal{K}_{ij} = \mathcal{U}_i \cap \mathcal{U}_j$. To improve efficiency, if $|\mathcal{K}_{ij}|$ exceeds a threshold $M$, we perform random sampling to obtain a subset $\hat{\mathcal{K}}_{ij} \subset \mathcal{K}_{ij}$.

The similarity score $Sim(i, j)$ is defined as a weighted summation over user pairs $(u, v)$ from $\hat{\mathcal{K}}_{ij}$:

\begin{equation}
    Sim(i, j) = \frac{1}{\sqrt{|\mathcal{U}_j|}} \sum_{u \in \hat{\mathcal{K}}_{ij}} \sum_{v \in \hat{\mathcal{K}}_{ij}, v \neq u} w_{uv},\nonumber
\end{equation}
where the pair weight $w_{uv}$ combines user activity decay and substructure strength:

\begin{equation}
    w_{uv} = \underbrace{\frac{1}{(|I_u| + \alpha_1)^\beta \cdot (|I_v| + \alpha_1)^\beta}}_{\text{User Activity Weight}} \cdot \underbrace{\frac{1}{|I_u \cap I_v| + \alpha_2}}_{\text{Overlap Penalty}} \nonumber
\end{equation}
where $\alpha_1, \alpha_2$ are smoothing parameters, and $\beta$ controls the strength of user activity penalization. The term $1/\sqrt{|\mathcal{U}_j|}$ acts as a normalization factor to prevent popular items from dominating the retrieval results. This formulation ensures that the ``Intent Anchors'' are derived from high-quality, non-trivial collaborative structures.

\section{ManCAR Algorithms}
\label{sec:app_mancar_algo}
We summarize the implementations of ManCAR's training and adaptive reasoning in Algorithms~\ref{alg:man_car_train} and \ref{alg:man_car_infer}, respectively. 

Notably, we find that $k=1$ is sufficient for all Amazon Reviews datasets and adopt this setting throughout our experiments.
Increasing $k$ consistently degrades performance on these datasets, likely due to their high sparsity induced by the dataset construction process. In such settings, expanding to higher-order neighborhoods introduces noisy candidates that dilute useful collaborative signals. We note that larger $k$ may be beneficial in denser, real-world industrial scenarios, where higher-order relations are more reliable, and we recommend applying exponential decay to $k$-hop neighbor weights to mitigate noise when increasing $k$.

\begin{algorithm}[t]
\caption{ManCAR Training Algorithm}\label{alg:man_car_train}
\begin{algorithmic}[1]
\Require train set $\{H^{(j)},i^{*(j)}\}_{j=1}^{N}$, batch size $B$, \#hops $k$, recent interaction window size $n$, reasoning step $T$

\State Construct global interaction graph $\mathcal{G}\gets \mathrm{SWING}(\{H^{(j)}\}_{j=1}^{N})$

\For{randomly sampled mini-batch $\{H^{(j)},i^{*(j)}\}_{j=1}^{B}$}

\State $\mathcal{C}^{(j)}(k) \gets \mathcal{C}(I_n^{(j)};\mathcal{G};k)$

\For{$t=1,\dots T$}

\State Get teacher prior $q^{(t)}(c^{(j)}|I^{(j)}_n,\mathcal{G})$ via Eq.~\ref{eqn:q_strategy_2}

\State Compute the main loss $\mathcal{L}^{(t)}_{\mathrm{main}}$ via Eq.~\ref{eqn:loss_main_t}

\State Compute the regularization loss $\mathcal{L}^{(t)}_{\mathrm{reg}}$ via Eq.~\ref{eqn:loss_kl_t}

\EndFor

\State Compute the overall loss $\mathcal{L}$ via Eq.~\ref{eqn:loss_overall}

\State Minimize $\mathcal{L}$ w.r.t. $\theta$

\EndFor
\State return $\theta$
\end{algorithmic}
\end{algorithm}

\begin{algorithm}[t]
\caption{ManCAR Adaptive Reasoning Inference Algorithm}\label{alg:man_car_infer}
\begin{algorithmic}[1]
\Require inference input $H$, global interaction graph $\mathcal{G}$ constructed on train set, \#hops $k$, recent interaction window size $n$, max reasoning step $T_{\mathrm{max}}$, early stop threshold $\epsilon$

\State Define $p_{\theta}^{(0)}\gets \mathrm{NULL}$

\For{$t=1\dots T_{\mathrm{max}}$}

\State $\mathcal{C}(k) \gets \mathcal{C}(I_n;\mathcal{G};k)$

\State $p_{\theta}^{(t)}\gets f_{\theta}^{(t)}(H,\mathcal{C}(k))$
\If{$p_{\theta}^{(t-1)}$ is not $\mathrm{NULL}$}

\If{$D_{\mathrm{KL}}(p_{\theta}^{(t-1)}||p_{\theta}^{(t)})<\epsilon$}

\State End the reasoning at step $t$

\EndIf

\EndIf

\EndFor
\State return $p_{\theta}^{(t)}$

\end{algorithmic}
\end{algorithm}

\section{Additional Analyses}

\subsection{More Results for Data-Aware Train-Test Compute Allocation}
\label{sec:allocation}

Tab.~\ref{tab:step_full} reports the reason step in train and inference phase of ERL, PRL, PLR, and ManCAR on seven datasets, providing supplementary results to Tab.~\ref{tab:step_partial} of Sec.~\ref{sec:in-depthana}.

\begin{table}[t]
    \centering
    \caption{Number of reasoning steps used during training and inference for ERL, PRL, PLR, and ManCAR across seven datasets. Note that LARES adopts a loop-architecture-based method rather than forward step-wise reasoning; the reported step count for LARES corresponds to the number of loop iterations.}

    \vspace{-10pt}
    \resizebox{\linewidth}{!}{
    \begin{tabular}{ll|cccc|c}
    \hline
    Dataset                  & Reason step & ERL & PRL & PLR & LARES & ManCAR \\ \hline
    \multirow{2}{*}{CDs}     & Train step  & 2   & 2   & 3  &4  & 5      \\
                             & Infer step  & 2   & 2   & 3  &4  & 1.84   \\ \hline
    \multirow{2}{*}{Video}   & Train step  & 2   & 2   & 2  &4 & 4      \\
                             & Infer step  & 2   & 2   & 2  &4 & 3.68   \\ \hline
    \multirow{2}{*}{Office}  & Train step  & 2   & 2   & 2  &4 & 2      \\
                             & Infer step  & 2   & 2   & 2  &4 & 1.93   \\ \hline
    \multirow{2}{*}{Arts}    & Train step  & 2   & 2   & 1  &4 & 1      \\
                             & Infer step  & 2   & 2   & 1  &4 & 1      \\ \hline
    \multirow{2}{*}{Music}   & Train step  & 2   & 2   & 2  &4 & 2      \\
                             & Infer step  & 2   & 2   & 2  &4 & 1.84   \\ \hline
    \multirow{2}{*}{Toys}    & Train step  & 1   & 2   & 1  &4 & 4      \\
                             & Infer step  & 1   & 2   & 1  &4 & 3.58   \\ \hline
    \multirow{2}{*}{Grocery} & Train step  & 2   & 2   & 3  &4 & 2      \\
                             & Infer step  & 2   & 2   & 3  &4 & 1.74   \\ \hline
    \end{tabular}
    }
\label{tab:step_full}
\end{table}

\subsection{Computation Complexity Analysis}
\label{sec:complexity}

Let $|C|$ and $|H|$ denote the length of context prompt and user history, respectively, $d$ denote the hidden dimension, $L$ denote the number of Transformer layers, $T'$ denote the number of reasoning steps.

\vspace{5pt}
\noindent
\textbf{Transformer Encoder.}
In multi-head self attention, the Q/K/V/out projection and weighted sum costs $\mathcal{O}((|C|+|H|)d^2)$ and $\mathcal{O}((|C|+|H|)^2d)$. In FFN, the two linear-layer costs $\mathcal{O}((|C|+|H|)d^2)$. Thus, the total FLOPs for $L$-layer Transformer encoder is $\mathcal{O}(L((|C|+|H|)^2d+(|C|+|H|)d^2))$.

\vspace{5pt}
\noindent
\textbf{Autoregressive Reasoning.}
With KV cache enabled, for each step $t'\in[1,...,T']$, the cost of Q/K/V/out projection is $\mathcal{O}(d^2)$, the cost of attention weighted sum is $\mathcal{O}((|C|+|H|+t'-1)d)$, the cost of FFN is $\mathcal{O}(d^2)$. Thus for $L$-layer Transformer, the total FLOPs for $T'$-step autoregressive reasoning part is $\mathcal{O}(L\sum_{t'=1}^{T'}((|C|+|H|+t'-1)d+d^2))$.

\vspace{5pt}
\noindent
\textbf{ManCAR Overall.}
Combining these two parts, the total FLOPs for ManCAR is $\mathcal{O}(L((|C|+|H|)^2d+(|C|+|H|)d^2)+L\sum_{t'=1}^{T'}((|C|+|H|+t'-1)d+d^2))$.

\begin{table}[t]
\centering
\footnotesize
\caption{FLOPs for ERL, PRL, PLR, LARES, and ManCAR.}
\resizebox{0.85\linewidth}{!}{
\begin{tabular}{lc}
\hline
Method            & FLOPs \\ \hline
ERL/PRL & $\begin{aligned}
                      \mathcal{O}\Big(&L(|H|^2d+|H|d^2) \\
                      & +L\sum_{t'=1}^{T'}\big((|H|+t'-1)d+d^2\big)\Big)
                     \end{aligned}$ \\ \hline
PLR & $\begin{aligned}
                      \mathcal{O}\Big(&L(|H|^2d+|H|d^2) \\
                      & +L\sum_{t'=1}^{T'}\big((|H|+nt'-n)nd+nd^2\big)\Big)
                     \end{aligned}$ \\ \hline
LARES & $\begin{aligned}
                      \mathcal{O}\Big(&(L_{\mathrm{pre}}+L_{\mathrm{core}})(|H|^2d+|H|d^2) \\
                      & +L_{\mathrm{core}}{T'}\big(|H|d+d^2\big)\Big)
                     \end{aligned}$ \\ \hline
ManCAR            & $\begin{aligned}
                     \mathcal{O}\Big(&L\big((|C|+|H|)^2d+(|C|+|H|)d^2\big) \\
                     & +L\sum_{t'=1}^{T'}\big((|C|+|H|+t'-1)d+d^2\big)\Big)
                     \end{aligned}$ \\ \hline
\end{tabular}
}
\label{tab:flops}
\end{table}

Among latent reasoning baselines such as ERL, PRL, PLR, and LARES, the primary architectural difference in ManCAR lies in the introduction of the graph-conditioned context prompt $\mathcal{C}$, which extends the computation of sequential encoding. 
Specifically, for the recurrent reasoning method LARES, which consists of a $L_{\mathrm{pre}}$-layer pre-encoder and a $L_{\mathrm{core}}$-layer core encoder reused across reasoning iterations, the computational cost of the reasoning component scales as $L_{\mathrm{core}}\,T' \big(|H|d + d^2\big)$. For PLR, which adopts $n$ parallel reasoning streams, the cost of the reasoning component is $L \sum_{t'=1}^{T'} \big((|H| + n t' - n)nd + n d^2\big)$.

Tab.~\ref{tab:flops} summarizes the corresponding FLOPs for these methods. The additional computation in ManCAR mainly stems from processing this extra context, which has been shown in earlier experiments to yield substantial performance gains.
We argue that this overhead is justified, especially in light of the effectiveness of test-time scaling strategies widely adopted in modern LLM systems.

\end{document}